\documentclass[journal]{IEEEtran}
\usepackage{amsmath,amsfonts,amssymb}
\usepackage{algorithm}
\usepackage[noend]{algpseudocode}
\usepackage{array}
\usepackage[caption=false,font=normalsize,labelfont=sf,textfont=sf]{subfig}
\usepackage{textcomp}
\usepackage{stfloats}
\usepackage{url}
\usepackage{verbatim}
\usepackage{graphicx}
\usepackage{cite}
\usepackage{bm, bbm, xcolor}
\newtheorem{theorem}{Theorem}

\newtheorem{remark}{Remark}

\begin{document}

\title{Discrete Coupling and Localized Motion for Pinching-Antenna Systems (PASS)}

\author{Jie~Jiang,~\IEEEmembership{Graduate Student Member,~IEEE}, Xiaoxia Xu,~\IEEEmembership{Member,~IEEE}, Chan-Tong Lam, ~\IEEEmembership{Senior Member,~IEEE}, Yuanwei Liu,~\IEEEmembership{Fellow,~IEEE}, and Arumugam~Nallanathan,~\IEEEmembership{Fellow,~IEEE}
\thanks{Jie Jiang and Xiaoxia Xu are with the School of Electronic Engineering and Computer Science, Queen Mary University of London, London, U.K. (e-mails: {jie.jiang, x.xiaoxia}@qmul.ac.uk).}
\thanks{Chan-Tong Lam is with the Faculty of Applied Sciences, Macao Polytechnic University, Macao (e-mail: ctlam@mpu.edu.mo).}
\thanks{Yuanwei Liu is with the Department of Electrical and Electronic Engineering, The University of Hong Kong, Hong Kong (e-mail: yuanwei@hku.hk).}
\thanks{Arumugam Nallanathan is with the School of Electronic Engineering and Computer Science, Queen Mary University of London, E1 4NS London, U.K., and also with the Department of Electronic Engineering, Kyung Hee University, Yongin-si, Gyeonggido 17104, Korea (e-mail: a.nallanathan@qmul.ac.uk).}
}

% The paper headers
%\markboth{Journal of \LaTeX\ Class Files,~Vol.~14, No.~8, August~2021}%
%{Shell \MakeLowercase{\textit{et al.}}: A Sample Article Using IEEEtran.cls for IEEE Journals}

%\IEEEpubid{0000--0000/00\$00.00~\copyright~2021 IEEE}
% Remember, if you use this you must call \IEEEpubidadjcol in the second
% column for its text to clear the IEEEpubid mark.

\maketitle

\begin{abstract}
The practical implementation of pinching-antenna systems (PASS) is challenging due to hardware limitations in large-scale antenna movement and continuous radiation power adjustment. 
This paper proposes a practical PASS-enabled downlink multi-user multiple-input multiple-output communication framework that enables discrete radiation power control and localized discrete antenna movement. Specifically, a discrete coupling strength model is exploited to tune the radiation power at each pinching antenna (PA) through quantized coupling spacing levels. Moreover, each PA can only move among discrete locations within a limited region determined by the movement speed and duration. Based on the proposed framework, a joint optimization problem of the PA positions, coupling strength, and transmit beamforming is formulated. Considering waveguide attenuation, the total average power consumption is minimized, subject to each user's minimum SINR requirement and localized motion constraints. To address this coupled mixed-integer nonconvex optimization problem, a globally optimal branch-and-bound-based algorithm is first developed for the multi-waveguide single-user scenario. To further reduce complexity, a scalable genetic algorithm-assisted particle swarm optimization (GA-PSO) method is developed for the multi-waveguide multi-user scenario, where GA operations are incorporated to preserve population diversity and alleviate premature convergence. Simulation results demonstrate that the proposed design significantly reduces the power consumption compared with the conventional PASS schemes and MIMO architectures.
\end{abstract}

\begin{IEEEkeywords}
Beamforming, coupling strength, pinching-antenna system (PASS), power radiation.
\end{IEEEkeywords}

\section{Introduction}
Future sixth-generation (6G) wireless networks are expected to support ultra-high data rates, massive connectivity, and reliable services in highly dynamic propagation environments \cite{you20216G}. To meet these requirements, flexible-antenna technologies have recently attracted increasing attention, since they enable proactive reconfiguration of wireless channels and provide additional spatial degrees of freedom (DoFs) \cite{new2025Fluid}. Representative flexible-antenna architectures include reconfigurable intelligent surfaces (RISs) \cite{liuReconfigurableIntelligentSurfaces2021}, fluid antennas \cite{wongFluidAntennaSystems2020}, and movable antennas \cite{zhu2025Tutoriala}. By adjusting electromagnetic responses or antenna positions, these technologies can improve channel conditions and enhance communication performance \cite{new2025Fluid}. 
To further mitigate the large-scale path loss and maintain stable line-of-sight (LoS) links, pinching-antenna systems (PASS) have recently emerged as a promising flexible-antenna architecture for 6G wireless communications \cite{dingFlexibleantennaSystemsPinchingantenna2025}. PASS employ dielectric waveguides as the primary transmission medium, where small dielectric particles, referred to as pinching antennas (PAs), are attached to the waveguides to radiate guided signals into free space \cite{liuPinchingantennaSystemsPASS2026}. Since dielectric waveguides can extend tens of meters with low propagation loss, PAs can be deployed close to users to establish strong LoS links and reduce free-space path loss. As a result, PASS provide several unique advantages, including large-scale antenna reconfiguration, last-meter communication capability, scalable deployment, and low-cost implementation \cite{xu2026Generalized}.

Motivated by these advantages, extensive research efforts have been devoted to PASS-enabled wireless communications. \textit{Pinching beamforming} has been introduced as a key PASS technique, where PA positions are optimized to reconfigure both large-scale path loss and signal phases \cite{dingFlexibleantennaSystemsPinchingantenna2025,wangModelingBeamformingOptimization2025,xuJointTransmitPinching2026,wang2025Joint,qin2025Joint,xuRateMaximizationDownlink2025,liuPinchingantennaSystemsPASS2026,xu2026Generalized,sunmultiuserBeamformingPinchingantenna2025,wang2025Antenna}. Specifically, the PA positions can be generally tuned by two structures, namely continuous activation and discrete activation \cite{liuPinchingantennaSystemsPASS2026}. Continuous activation assumes that PA positions can be continuously adjusted along the waveguide, thereby providing high spatial flexibility for pinching beamforming \cite{dingFlexibleantennaSystemsPinchingantenna2025,xuJointTransmitPinching2026,wang2025Joint,qin2025Joint,xuRateMaximizationDownlink2025}. The authors in \cite{dingFlexibleantennaSystemsPinchingantenna2025} provided one of the first analytical studies of PASS, showing that the meter-scale spatial reconfiguration of PAs along the waveguide can effectively reduce large-scale path loss, enhance average user rates, and enable flexible extensions to non-orthogonal multiple access (NOMA) and multi-waveguide multiple-input single-output (MISO) transmissions. The authors in \cite{xuJointTransmitPinching2026} further generalized PASS to a downlink multi-user MISO architecture with multiple PAs on each waveguide, where PA locations were continuously optimized to jointly reshape path loss and signal phases.
In contrast, discrete-activation restricts PA activation to a finite set of predefined candidate positions, which is more suitable for low-cost practical implementation \cite{tyrovolasErgodicRateAnalysis2025,wangModelingBeamformingOptimization2025,wangAntennaActivationResource2026}. Bridging the physical modeling and beamforming design of PASS, the authors in \cite{wangModelingBeamformingOptimization2025} developed a physics-based hardware model and jointly optimized transmit beamforming and pinching beamforming under both continuous and discrete PA activation schemes. Following this line, the authors in \cite{wangAntennaActivationResource2026} investigated a practical multi-waveguide PASS with pre-configured discrete PA positions, where waveguide assignment, antenna activation, SIC decoding order, and power allocation were jointly optimized to maximize the sum rate in NOMA-assisted downlink transmission.

Beyond the spatial reconfiguration of PAs, coupling-induced radiation power distribution among multiple PAs is another key factor affecting PASS performance \cite{wangModelingBeamformingOptimization2025,xuPinchingantennaSystemsPASS2025}.
Different from conventional antenna arrays where the radiated power of each antenna element can be directly controlled through dedicated RF chains, the radiation power distribution among PAs is determined by the electromagnetic coupling process between dielectric waveguides and PAs \cite{wangModelingBeamformingOptimization2025,haus1991Coupledmode,millerCoupledWaveTheory1954,marcuseTheoryDielectricOptical2013}. To characterize this physical process, the coupled-mode theory (CMT) provides a fundamental basis for modeling the coupling strength and the resulting radiation behavior \cite{haus1991Coupledmode}. In \cite{wangModelingBeamformingOptimization2025}, each PA was modeled as an open-ended directional coupler, based on which equal-power and proportional-power radiation models were developed for multiple PAs along the same waveguide. To enable controllable radiation power allocation beyond fixed radiation assumptions, the authors in \cite{xuPinchingantennaSystemsPASS2025} proposed an adjustable power radiation model, where the radiation power ratios of PAs are controlled by tuning the coupling spacing between PAs and waveguides. Closed-form spacing expressions were derived for equal-power radiation, and a discrete activation PASS beamforming problem was formulated for transmit power minimization.

Several recent studies have further moved toward practical PASS modeling \cite{xuJointRadiationPower2025,xuPinchingantennaSystemsInwaveguide2026,huSumRateMaximizationPinching2025b,zhangDirectionalPinchingAntennaSystems2025}. In \cite{xuJointRadiationPower2025}, motion power consumption was incorporated into PASS design, where PA positions, radiation power ratios, and transmit beamforming were jointly optimized under both continuous and discrete antenna movement. In \cite{xuPinchingantennaSystemsInwaveguide2026}, in-waveguide attenuation was explicitly incorporated into system modeling and algorithm design, and its impact on the optimal PA placement and achievable rate was analyzed. In addition, frequency-dependent waveguide attenuation was considered for PA-assisted NOMA systems with multiple dielectric waveguides in \cite{huSumRateMaximizationPinching2025b}, showing that waveguide attenuation may significantly affect system performance, especially at high carrier frequencies.

Despite these advancements, existing PASS designs still rely on idealized assumptions of PA motions and power radiation. On the one hand, most beamforming-oriented studies assume unconstrained antenna position optimization or continuous activation over the entire waveguide \cite{dingFlexibleantennaSystemsPinchingantenna2025,xuJointTransmitPinching2026,wang2025Joint,qin2025Joint}, whereas practical PA movement is constrained by actuator speed, motion latency, motion energy consumption, and mechanical control resolution \cite{basbug2017Design}. In practical implementations, PAs can usually be adjusted only within a localized region and over a finite set of candidate positions during each transmission frame. On the other hand, existing power radiation models are commonly based on equal-power radiation or continuously adjustable radiation control \cite{xuBeamformingPinchingAntenna2025,wangModelingBeamformingOptimization2025,xuPinchingantennaSystemsPASS2025,xuJointRadiationPower2025}. Although coupling-length- and spacing-based models provide physics-grounded ways to configure radiation ratios, they usually require either fabricated coupling structures or continuously and accurately controllable coupling spacing, which is difficult to realize in practical electromechanical or RF switching implementations. As a result, coupling strength control with finite discrete coupling spacing states tailored to practical hardware remains insufficiently investigated. Furthermore, although in-waveguide attenuation has been separately studied, its impact has not been fully integrated into PASS designs with localized PA movement and discrete radiation control. Therefore, how to unify these hardware-induced constraints into a tractable joint optimization framework remains underexplored.

Motivated by the above observations, this paper investigates a practical downlink multi-user PASS framework by jointly considering localized PA movement, discrete coupling strength, and in-waveguide attenuation. Specifically, each PA can only move within a limited local region and select its position from a finite set of feasible mounting points. A hardware-compatible discrete coupling strength model is adopted, where each PA selects one coupling spacing level from a finite set to control its radiation power ratio. Practical in-waveguide attenuation and phase rotation are also incorporated into the cascaded PASS channel model. Under this framework, the PA positions, coupling strength, and transmit beamforming are jointly optimized to minimize the total power consumption, including both transmit power and PA motion power, subject to users' SINR requirements and localized motion constraints. The formulated problem is a highly coupled mixed-integer nonconvex problem, since the discrete PA positions simultaneously affect the free-space path loss, propagation phase, and in-waveguide attenuation, while the discrete coupling strength determines the cascaded radiation coefficients and reshapes the power distribution among activated PAs. As a result, the discrete structural variables and continuous transmit beamforming are tightly intertwined in both the objective function and the SINR constraints. To address this challenge, we develop a globally optimal branch-and-bound (BnB) algorithm for the multi-waveguide single-user (MWSU) scenario and a scalable genetic algorithm (GA)-assisted particle swarm optimization (GA-PSO) algorithm for the general multi-waveguide multi-user (MWMU) scenario.

The main contributions of this paper are summarized as follows.
\begin{itemize}
\item
We propose a practical PASS framework that enables discrete coupling strength control and localized discrete PA movement. We first conceive a discrete coupling strength model, which places PAs at predefined coupling spacing levels to adjust discrete power radiation. Moreover, the position of each PA can only be selected from a set of mounting points within a localized region determined by the PA movement speed and movement duration. Considering in-waveguide attenuation, the total power consumption for both transmission and PA motion is minimized under users' minimum SINR requirements and PAs' motion range constraints. This is formulated as a joint optimization problem of the localized PA positions, discrete coupling strength, and transmit beamforming.

\item
To address the resulting highly coupled mixed-integer nonconvex problem, we first develop a globally optimal BnB algorithm for the MWSU scenario. By exploiting the maximum-ratio transmission (MRT) structure, the continuous transmit beamforming variable is eliminated in closed form, and the original problem is reduced to a discrete structural optimization problem. The proposed BnB algorithm is proved to converge to the global optimum within a prescribed optimality tolerance.

\item
For the MWMU scenario, we further develop a scalable GA-PSO algorithm to reduce the complexity for the nonconvex problem. The proposed algorithm adopts a dual-layer structure. The outer-layer GA-assisted PSO explores the discrete PA positions and coupling spacing levels, while the inner-layer second-order cone programming (SOCP) optimally solves the transmit beamforming subproblem for each candidate PASS configuration. The GA module is effectively introduced to enhance population diversity and alleviate premature convergence in the discrete search space.

\item
Simulation results verify the effectiveness of the proposed practical PASS framework and the developed algorithms. In particular, the comparison with the equal-power discrete-motion PASS baseline shows that coupling strength control effectively avoids inefficient uniform radiation, while the comparison with the adjustable-coupling discrete-activation PASS baseline confirms that localized PA movement is critical for reshaping the effective channel gains. Moreover, the proposed PASS consistently outperforms conventional multiple-input multiple-output (MIMO) and hybrid MIMO architectures, highlighting its capability in mitigating large-scale path loss and enhancing spatial beamforming flexibility through reconfigurable pinching beamforming.
\end{itemize}

The rest of this paper is organized as follows. 
Section II presents the proposed PASS framework with discrete PA movement and an adjustable power radiation model, and formulates the corresponding optimization problem. 
Section III proposes a globally optimal BnB algorithm for the MWSU scenario, while Section IV develops the GA-PSO algorithm for the MWMU scenario. 
Section V provides numerical results to verify the effectiveness of the proposed framework and algorithms. 
Finally, Section VI concludes the paper.

\textit{Notations}: The variable, vector, and matrix are denoted by $x$, $\mathbf{x}$, and $\mathbf{X}$, respectively. $|x|$ denotes the absolute value of a real number and the modulus of a complex number. $\|\mathbf{x}\|$ denotes the Euclidean norm.
$\Re\left\{x\right\}$ and $\Im\left\{x\right\}$ denote the real and imaginary parts of $x$. 
$\mathbf{X}^{T}$ and $\mathbf{X}^{H}$ denote the transpose and Hermitian transpose of $\mathbf{X}$, respectively.

\section{System model}

\begin{figure}[t]
  \centering
  \includegraphics[width=0.45\textwidth]{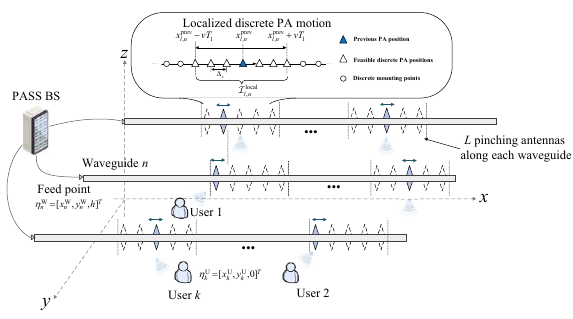}%systemmodel.eps}
  \caption{\label{fig:model}Illustration of the proposed PASS-enabled downlink MIMO communication system.
  }
\end{figure}

\begin{figure}[t]
  \centering
  \includegraphics[width=0.45\textwidth]{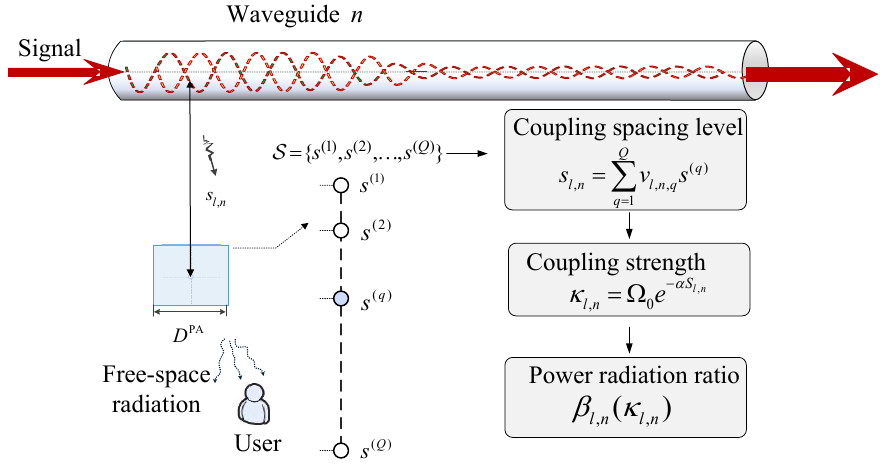}%systemmodel.eps}
  \caption{\label{fig:spacing}Discrete coupling strength control for adjustable power radiation. }
\end{figure}

As illustrated in Fig.~\ref{fig:model}, we consider a PASS-enabled downlink multi-user multiple-input multiple-output network, where a multi-antenna base station (BS) serves a set of single-antenna users indexed by $\mathcal{K}=\{1,\dots,K\}$. 
The BS is equipped with $N$ dielectric waveguides, indexed by $\mathcal{N}=\{1,\dots,N\}$. Each waveguide is connected to an independent RF chain and can radiate signals through multiple PAs. Along each waveguide, $L$ PAs, indexed by $\mathcal{L}=\{1,\dots,L\}$, are deployed, whose longitudinal positions can be dynamically adjusted within a set of discrete feasible locations. The total number of PAs in the system is therefore $M=NL$. For ease of notation, we use the tuple $(l,n)$ to denote the $l$-th PA on the $n$-th waveguide.

\subsection{Coupling Strength Control for Adjustable Power Radiation}

The power radiation ratio of each PA can be flexibly regulated by controlling the coupling strength between the waveguide and PAs. Specifically, the coupling strength refers to the local PA-waveguide power exchange strength, which is characterized by the coupling coefficient $\kappa_{l,n}$. For a fixed fabricated coupling length $D^{\mathrm{PA}}$, it can be equivalently reflected by the local radiation factor $\sin(\kappa_{l,n}D^{\mathrm{PA}})$. In practice, the local coupling strength can be tuned by varying the coupling spacing. The coupling spacing between the $(l,n)$-th PA and the waveguide $s_{l,n}$ is defined as the perpendicular distance between the waveguide surface and the radiating element of the PA \cite{liuPinchingantennaSystemsPASS2026}. Modifying this spacing changes how deeply the PA interacts with the evanescent field surrounding the waveguide, thereby adjusting the amount of guided power coupled into the PA and radiated into free space.

Following the CMT \cite{haus1991Coupledmode,wangModelingBeamformingOptimization2025}, the coupling coefficient $\kappa_{l,n}$ between waveguide $n$ and the $(l,n)$-th PA is characterized by the overlap integral of their modal fields, i.e.,
\begin{equation}\label{eq:CMT_integral_discrete}
    \kappa_{l,n}
    = \frac{\omega\epsilon_{0}}{4}
      \iiint_{V(s)}
      \Delta\epsilon \,
      \mathbf{E}_{{\rm wg},n}\!\cdot\!
      \mathbf{E}_{{\rm pa},l}^{*}
      \,\mathrm{d}V(s),
\end{equation}
where $\mathbf{E}_{{\rm wg},n}$ and $\mathbf{E}_{{\rm pa},l}$ are the power-normalized electric field distributions of waveguide $n$ and PA $l$, respectively, $\epsilon_{0}$ is the vacuum permittivity, $\Delta\epsilon$ denotes the dielectric perturbation associated with the pinching structure,  $\omega$ is the angular frequency, and $V(s)$ is the coupling region determined by $s$.

Following the analytical approximation in \cite{xuPinchingantennaSystemsPASS2025}, $\kappa_{l,n}$ can be represented as an exponentially decaying function of the spacing, which is given by
\begin{equation}\label{eq:coupling_coefficient}
    \kappa_{l,n}(s_{l,n}) = \Omega_{0} e^{-\alpha s_{l,n}}, \, s_{l,n} \in [s_{\min}, s_{\max}],\, \forall l, n,
\end{equation}
where $\Omega_{0}$ captures mode-overlap and normalization effects, and  $\alpha = \sqrt{\gamma_{0}^{2} -\frac{4\pi^2}{\lambda_{f}^2}n_{\mathrm{clad}}^{2}}$ is the cladding decay constant determined by the guided-mode propagation constant and the cladding refractive index. This relationship explicitly captures the diminishing nature of evanescent coupling as the separation increases, providing a practical mechanism to tune the coupling strength via spacing adjustments.

Although prior PASS power radiation models have shown that equal or flexible radiation can be achieved by customizing the coupling length of each PA, such a design is not well suited for real-time reconfiguration, since the coupling length is typically fixed after fabrication \cite{wangModelingBeamformingOptimization2025}. Alternatively, equal-power radiation can be realized by adjusting the coupling spacing. However, this approach requires continuous and highly accurate spacing control, which is difficult to guarantee in practical hardware \cite{xuPinchingantennaSystemsPASS2025}. In realistic implementations, the coupling spacing is usually controlled by finite-resolution mechanisms, such as electromagnetic actuation or RF MEMS switching, which can only provide a limited number of stable coupling states. As shown in Fig.~\ref{fig:spacing}, we adopt a discrete coupling spacing model, where the coupling spacing is quantized into a finite set of predefined spacing levels, denoted by $ \mathcal S=\{s^{(1)},s^{(2)},\ldots,s^{(Q)}\}$, where $Q$ is the number of available spacing levels. A uniform quantization scheme is adopted, i.e., $s^{(q)} = s_{\min}+(q-1)\Delta_s$, where $\Delta_{s} = (s_{\max}-s_{\min})/(Q-1)$. Let $v_{l,n,q}\in\{0,1\}$, $\forall q\in \mathcal{Q} = \{1,\dots,Q\}$, denote whether the $(l,n)$-th PA selects coupling spacing level $s^{(q)}$. Each PA selects only one coupling spacing level, satisfying $\sum_{q=1}^{Q}v_{l,n,q} = 1, \forall l,n$. Then, the coupling spacing is given by $ s_{l,n}=\sum_{q=1}^{Q}v_{l,n,q}s^{(q)}$.
Accordingly, the corresponding coupling strength between the waveguide $n$ and the $(l,n)$-th PA is given by
\begin{equation}\label{eq:kappa_discrete}
    \kappa_{l,n} 
    = \sum_{q=1}^{Q} v_{l,n,q} \Omega_{0} e^{-\alpha s^{(q)}}.
\end{equation}

Given the discrete coupling strength, the power radiation ratio of $(l,n)$-th PA is obtained based on the cascaded CMT-based model \cite{xuPinchingantennaSystemsPASS2025}. Since a PA is activated once it is installed at a selected mounting point, the power radiation ratio is expressed as
\begin{equation}\label{eq:beta_discrete}
    \beta_{l,n} = \sin\big(\kappa_{l,n} D^{\mathrm{PA}}\big)
      \prod_{i=1}^{l-1} \sqrt{1 -\sin^{2}\big(\kappa_{i,n} D^{\mathrm{PA}}\big)},
\end{equation}
where $D^{\mathrm{PA}}$ is the fixed fabricated coupling length of each PA. The selected spacing level $s_{l,n}$ determines the coupling strength $\kappa_{l,n}$ and thus the local radiation factor $\sin(\kappa_{l,n}D^{\mathrm{PA}})$, whereas the final radiation ratio $\beta_{l,n}$ is additionally determined by the residual guided power after the preceding PAs on the same waveguide.

\vspace{-1.5em}
\subsection{Discrete PA Movement Model}

We consider a three-dimensional Cartesian coordinate system. 
The BS is located at $\boldsymbol{\eta}_{0}=(0,0,h)$, where $h$ denotes the fixed deployment height. 
The dielectric waveguides are deployed parallel to the $x$-axis at height $h$. 
The feed point of waveguide $n \in \mathcal{N}$ is denoted by $\boldsymbol{\eta}_{n}^{\mathrm{W}}= [x_{n}^{\mathrm{W}},\,y_{n}^{\mathrm{W}},\,h]^{T}$, where $x_{n}^{\mathrm{W}}\in[0,D^{\mathrm{W}}]$ and $D^{\mathrm{W}}$ represents the waveguide length. 
We assume users remain quasi-static within each frame, and the position of user $k$ is given by $\boldsymbol{\eta}_{k}^{\mathrm{U}}= [x_{k}^{\mathrm{U}},\,y_{k}^{\mathrm{U}},\,0]^{T},\, \forall k\in\mathcal{K}$.

We adopt a two-phase protocol within each time frame of duration $T=T_1+T_2$ \cite{xuJointRadiationPower2025}.
During the first phase $T_1$, the PAs are repositioned along the waveguides, while in the second phase $T_2$, downlink transmission is performed using optimized beamforming and power radiation configurations.
Due to mechanical limitations, each PA can only move within a bounded range during $T_1$.
Let $\boldsymbol{\eta}_{l,n}^{\mathrm{prev}}=[x_{l,n}^{\mathrm{prev}},\,y_n^{\mathrm{W}},\,h]^T$ denote the position of the $(l,n)$-th PA at the beginning of the current frame, which corresponds to its final position in the previous frame. The movement distance is defined as $d_{l,n}^{\mathrm{mov}} = \big| x_{l,n} - x_{l,n}^{\mathrm{prev}} \big|$.
To accommodate the discrete movement capability of practical actuators, each waveguide is equipped with a predefined set of mounting points, denoted by 
$\mathcal{X}^{\mathrm{global}}=\{x^{(1)},x^{(2)},\dots,x^{(I)}\}\subset[0,D^{\mathrm{W}}]$,
where the mounting points are uniformly spaced with spatial resolution $\Delta_x$, i.e., $x^{(i+1)}-x^{(i)}=\Delta_x$, $\forall i=1,\ldots,I-1$. Here, $D^{\mathrm W}$ denotes the waveguide length, and $I$ is the number of mounting points on each waveguide. The corresponding global index set is defined as $\mathcal{I}^{\mathrm{global}}=\{1,2,\ldots,I\}$.

Unlike conventional continuous antenna movement over the entire waveguide, practical PA movement is constrained by the finite motion range, which is given by
$\big| x_{l,n} - x_{l,n}^{\mathrm{prev}} \big| \le vT_1,\, \forall l,n,$ where $v$ denotes the maximum moving speed.
Therefore, for the $(l,n)$-th PA, the locally feasible index set is defined as
\begin{equation}
\mathcal{I}_{l,n}^{\mathrm{local}} 
=  \left\{ i \in \mathcal{I}^{\mathrm{global}} 
\,\middle|\,
|x^{(i)} - x_{l,n}^{\mathrm{prev}}| \le vT_1
\right\},
\end{equation}
and the corresponding locally feasible position set is given by
\begin{equation}
\mathcal{X}_{l,n}^{\mathrm{local}} 
= \{ x^{(i)} \mid i \in \mathcal{I}_{l,n}^{\mathrm{local}} \}.
\end{equation}
Each PA selects exactly one mounting point from its locally feasible set. To characterize the discrete position decision, we introduce binary variables $\phi_{l,n,i} \in \{0,1\}$, $\forall i \in \mathcal{I}^{\mathrm{global}}$, where $\phi_{l,n,i}=1$ indicates that the $(l,n)$-th PA is assigned to the mounting point indexed by $i$. 
Note that $\phi_{l,n,i}=0$ for all $i \notin \mathcal{I}_{l,n}^{\mathrm{local}}$. 
Accordingly, the position of the $(l,n)$-th PA is expressed as
\begin{equation}
x_{l,n} = \sum_{i \in \mathcal{I}_{l,n}^{\mathrm{local}}} \phi_{l,n,i} x^{(i)}.
\end{equation}
Moreover, adjacent PAs maintain a minimum separation $\Delta_{\min}$ to avoid mutual coupling among antennas, i.e., $x_{l+1,n}-x_{l,n}\geq \Delta_{\min},\, \forall l \in \mathcal{L}' =\{1,\ldots,L-1\}, n\in\mathcal N$.
Finally, each PA occupies exactly one feasible mounting point, while each mounting point can host at most one PA on each waveguide, yielding $\sum_{i \in \mathcal{I}_{l,n}^{\mathrm{local}}} \phi_{l,n,i} = 1, \, \forall l,n,$ and $\sum_{l=1}^{L} \phi_{l,n,i} \le 1, \, \forall n \in\mathcal N,\, i \in \mathcal{I}^{\mathrm{global}}$. Accordingly, the spatial position of the $(l,n)$-th PA is given by $\boldsymbol{\eta}_{l,n} = [x_{l,n},\,y_n^{\mathrm{W}},\,h]^T$.

\vspace{-1.5em}
\subsection{Signal Model}

In this subsection, we present the signal model for the general MWMU PASS.

\subsubsection{In-Waveguide Propagation}
\label{subsec:in_wg_propagation}
Let $\widetilde{c}_k$ denote the information symbols intended for user $k$ satisfying $\mathbb{E}[\widetilde{c}_k^H \widetilde{c}_k]=1$ and $\bm{\widetilde{c}} = [\widetilde{c}_1,\dots,\widetilde{c}_K]^T \in \mathbb{C}^{K\times1} $.
We denote the transmit beamforming matrix as $\mathbf W=\left[\mathbf w_{1},\dots, \mathbf w_{K}\right]\in\mathbb{C}^{N\times K}$, where $\mathbf w_{k} = [ w_{1,k},\dots, w_{N,k}]^T \in\mathbb{C}^{N\times1} $ denotes the transmit beamforming vector for user $k$. Therefore, the baseband signal injected into waveguide $n$ is given by
\begin{equation}
    c_n = \sum_{k\in\mathcal{K}} w_{n,k}\widetilde{c}_k, \forall n\in\mathcal{N}.
\end{equation}

Upon injection, the signal propagates along the direction of the dielectric waveguide, simultaneously accumulating phase and gradually transferring power to PAs. For the PA indexed by $l$ on waveguide $n$, the in-waveguide propagation distance from the feed point $\boldsymbol{\eta}_n^{\mathrm{W}}=[x_n^{\mathrm{W}},y_n^{\mathrm{W}},h]^T$
to the mounting position $\boldsymbol{\eta}_{l,n}=[x_{l,n},y_n^{\mathrm{W}},h]^T$ is $d^{\mathrm{wg}}_{l,n} = \big|x_{l,n} - x_n^{\mathrm{W}}\big|$.

We explicitly account for in-waveguide attenuation and phase rotation \cite{xuPinchingantennaSystemsInwaveguide2026}. Due to material properties and fabrication imperfections in real hardware, the guided mode experiences exponential attenuation and phase rotation over the distance. Consequently, the in-waveguide propagation factor from the feed point of waveguide $n$ to the $l$-th PA is defined as
\begin{equation}\label{eq:wg_gain}
    \widetilde{g}_{l,n}= \exp\!\left(-\alpha_{\mathrm{w}} d^{\mathrm{wg}}_{l,n}\right) \exp\!\left(-j\frac{2\pi}{\lambda_{\mathrm{w}}}d^{\mathrm{wg}}_{l,n}\right),
\end{equation}
where $\alpha_{\mathrm{w}}$ is the in-waveguide attenuation coefficient, which depends on the waveguide material, geometry, and operating frequency \cite{jiUltralowlossOnchipResonators2017,bautersPlanarWaveguidesLess2011}. The waveguide wavelength is given by $\lambda_{\mathrm{w}} = \lambda_f / n_{\mathrm{eff}}$, where $\lambda_f = \frac{c}{f_c}$ is the carrier wavelength, $c$ is the speed of light, $f_c$ is the carrier frequency, and $n_{\mathrm{eff}}$ is the effective refractive index of the dielectric waveguide.

Combining \eqref{eq:beta_discrete} and \eqref{eq:wg_gain}, the in-waveguide channel vector $\mathbf{g}_n \in \mathbb{C}^{L\times 1}$ is given by
\begin{equation}\label{eq:gln_def}
   \mathbf{g}_n =  [ \beta_{1,n} \widetilde{g}_{1,n}, \dots, \beta_{L,n} \widetilde{g}_{L,n}]^T.
\end{equation}
As each waveguide is connected to a subset of PAs, the overall in-waveguide channel matrix $\mathbf{G} \in\mathbb{C}^{M\times N}$ is given by $\mathbf{G}= \mathrm{blkdiag} \left(\mathbf{g}_{1},\mathbf{g}_{2}, \dots, \mathbf{g}_{N}\right)$.

\subsubsection{Free-Space Propagation}
We focus on LoS-dominant propagation between the PAs and the users.  The channel coefficient from the $l$-th PA on waveguide $n$ to user $k$ is modeled by the spherical-wave free-space expression
\begin{equation}
 h_{l,n,k}^{H}  =  \frac{\lambda_f}{4\pi \big\|\bm{\eta}_k^{\mathrm{U}}-\bm{\eta}_{l,n}\big\|}
   \exp\!\left(-j\frac{2\pi}{\lambda_{f}}\big\|\bm{\eta}_k^{\mathrm{U}}-\bm{\eta}_{l,n}\big\|\right).
\end{equation}

Let $\mathbf{h}_{n,k}^{H}=[h_{1,n,k}^{H},\dots,h_{L,n,k}^{H}] \in \mathbb{C}^{1\times L}$ denote the channel vector from the PAs on waveguide $n$ to user $k$, then the free-space channel vector is $\mathbf{h}_k^{H} = [\mathbf{h}_{1,k}^{H},\dots,\mathbf{h}_{N,k}^{H}]\in\mathbb{C}^{1\times M}$.
Therefore, the overall received signal at user $k$ is given by
\begin{align}
y_k =\underbrace{\mathbf{h}_k^H\mathbf{G} \mathbf w_k \widetilde{c}_k }_{\text{Desired signal}} + \underbrace{\underset{k' \in \mathcal{K}\setminus \{k\}}{\sum} \mathbf{h}_k^H\mathbf{G}\mathbf w_{k'} \widetilde{c}_{k'}}_{\text{Multi-user interference}} + n_k,
\end{align}
where $n_k\sim\mathcal{CN}(0,\sigma^2)$ is additive white Gaussian noise. Correspondingly, the signal-to-interference-plus-noise ratio (SINR) of user $k$ can be expressed as
\begin{equation}
 \mathrm{SINR}_k 
  = \frac{\big|\mathbf{h}_k^{H}\mathbf{G}\mathbf w_k\big|^2}   {\sum_{k'\neq k}\big|\mathbf{h}_k^{H}\mathbf{G}  \mathbf w_{k'}\big|^2+\sigma^2}.
\end{equation}

\vspace{-1.5em}
\subsection{Power Consumption for PASS}
The total average power consumption of the considered PASS system over one frame of duration $T_1+T_2$ consists of two components: the average transmit power and the average PA movement power. The mechanical movement of the PAs along the waveguides is driven by electric motors (e.g., stepper motors), which consume a constant power denoted by $P_{\mathrm{m}}$ and support the movement speed $v$. Since the movement distance of the $(l,n)$-th PA within one frame is $ d_{l,n}^{\mathrm{mov}} = \big| x_{l,n} - x_{l,n}^{\text{prev}}\big|$, the motion energy consumption is given by \cite{wuGloballyOptimalMovable2025}:
\begin{equation}\label{eq:Emn_def}
    E_{l,n}  = P_{\mathrm{m}}\, t_{l,n}^{\mathrm{mov}}= \frac{P_{\mathrm{m}}}{v} \big| x_{l,n} - x_{l,n}^{\text{prev}}\big|.
\end{equation}

The total motion energy consumption of all PAs within one frame is given by $E_{\mathrm{m}}  = \sum_{n\in\mathcal{N}}\sum_{l\in\mathcal{L}} E_{l,n}$.
Correspondingly, the average motion power over the whole frame is expressed as $P_{\mathrm{mot}} = \frac{1}{T}     \sum_{n\in\mathcal{N}}\sum_{l\in\mathcal{L}} E_{l,n}$.
Since downlink transmission only occurs in the second phase $T_2$,
the average transmit power is given by $ P_{\mathrm{t}}   = \frac{T_2}{T}  \sum_{k\in\mathcal{K}}\|\mathbf w_k\|_2^2$.

Therefore, the total average power consumption can be expressed as
\begin{equation}\label{eq:Ptotal_def}
    P_{\mathrm{total}}
    = \frac{T_2}{T}
      \sum_{k\in\mathcal{K}}\|\mathbf w_k\|_2^2
      + \frac{1}{T}
        \sum_{n\in\mathcal{N}}\sum_{l\in\mathcal{L}} E_{l,n}.
\end{equation}

\vspace{-1.75em}
\subsection{Problem Formulation}
\vspace{-0.25em}
In this paper, we aim to minimize the average power consumption while guaranteeing the QoS requirements by jointly optimizing the discrete PA position variables $\bm \Phi$, the coupling spacing selection variables $\mathbf V$, and the transmit beamforming matrix $\mathbf W$. The problem is formulated as
\begin{align}
%\mathcal P_{\mathrm{M}}:\ 
\mathcal P_0:\ 
\underset{\mathbf \Phi,\mathbf V,\mathbf W}{\min}\quad 
& P_{\rm total} \label{P0_obj}\\
{\rm s.t.}\quad 
& {\rm SINR}_k\ge \Gamma_k,\, \forall k\in\mathcal K, \tag{\ref{P0_obj}a} \label{SINR}\\
& \sum_{i \in \mathcal{I}_{l,n}^{\mathrm{local}}} \phi_{l,n,i} = 1, \, \forall l,n,\tag{\ref{P0_obj}b} \label{cons:P0_one_point}\\
& \sum_{l=1}^{L} \phi_{l,n,i} \le 1, \, \forall n,\, i \in \mathcal{I}^{\mathrm{global}}, \tag{\ref{P0_obj}c} \label{cons:P0_one_pa_per_point}\\
& \sum_{q=1}^{Q}v_{l,n,q}= 1,\, \forall l\in\mathcal L,\,n\in\mathcal N, \tag{\ref{P0_obj}d} \label{cons:P0_spacing_sel}\\
& \Big|x_{l,n}-x^{\rm prev}_{l,n}\Big|\le vT_1,\ \forall l\in\mathcal L,\,n\in\mathcal N, \tag{\ref{P0_obj}e} \label{cons:P0_motion}\\
&x_{l+1,n} - x_{l,n} \ge \Delta_{\min},  \forall l \in\mathcal L',\, \forall n \in \mathcal{N}, \tag{\ref{P0_obj}f} \label{cons:P0_min_spacing}\\
& \phi_{l,n,i}\in\{0,1\}, v_{l,n,q}\in\{0,1\},\ \forall l, n, i,q, \, \tag{\ref{P0_obj}g} \label{cons:P0_phi_bin}
\end{align}
where $\bm\Phi\triangleq\{\phi_{l,n,i}\}$ and $\mathbf V\triangleq\{v_{l,n,q}\}$ denote the sets of binary PA position and coupling spacing selection variables, respectively. The parameter $\Gamma_k$ denotes the SINR requirement of user $k$. Constraint \eqref{SINR} guarantees the QoS requirement of each user. Constraint \eqref{cons:P0_one_point} and \eqref{cons:P0_one_pa_per_point} ensure valid PA position selection, where each PA occupies exactly one mounting point and each mounting point hosts at most one PA. Constraint \eqref{cons:P0_spacing_sel} guarantees that each PA selects one coupling spacing level for radiation control. Constraint \eqref{cons:P0_motion} limits the PA movement distance within one time frame according to the mechanical motion capability. Constraint \eqref{cons:P0_min_spacing} ensures that the minimum distance between adjacent PAs per waveguide is no smaller than the predefined threshold $\Delta_{\min}$. Constraint \eqref{cons:P0_phi_bin} enforces the binary nature of the antenna position selection and spacing selection variables.

\vspace{-1.0em}
\section{BnB-Based Algorithm for the MWSU Scenario}
\label{sec:BnB_MWSU}

In this section, we propose a BnB-based algorithm for the MWSU scenario. In this case, the BS serves a single user through multiple waveguides. To solve the NP-hard problem, we adopt the BnB algorithm to optimize the discrete PA movement and coupling spacing variables. For any given PA positions and coupling spacing selections, the transmit beamforming vector admits a closed-form MRT solution.

\vspace{-1.25em}
\subsection{Problem Formulation}

For the MWSU scenario, problem $\mathcal P_0$ in \eqref{P0_obj} is reduced to
\begin{align}
\mathcal{P}_{\mathrm{S}}:\ 
\underset{\bm{\Phi},\bm{V},\mathbf w}{\min}\quad 
& P_{\mathrm{total}}^{\mathrm{S}}
= \frac{T_2}{T} \|\mathbf w\|_2^2
+  \frac{1}{T} \sum_{n\in\mathcal{N}}\sum_{l\in\mathcal{L}} E_{l,n}
\label{P_MWSU_obj}\\
\mathrm{s.t.}\quad
& \frac{\left|\mathbf h^H \mathbf G \mathbf w\right|^2}{\sigma^2} \ge \Gamma,
\tag{\ref{P_MWSU_obj}a}\label{cons:MWSU_SNR}\\
& \eqref{cons:P0_one_point}, \,\eqref{cons:P0_one_pa_per_point}, \,\eqref{cons:P0_spacing_sel},\,\eqref{cons:P0_motion}, \,\eqref{cons:P0_min_spacing}, \,\eqref{cons:P0_phi_bin}.\notag
\end{align}

Let $\mathbf x=[x_{1,1},\ldots,x_{L,N}]^T \in\mathbb R^{LN\times1}$, $\mathbf s=[s_{1,1},\ldots,s_{L,N}]^T \in\mathbb R^{LN\times1}$ denote the stacked PA position vector and coupling spacing vector, respectively.
For given $\mathbf x$ and $\mathbf s$, define the equivalent channel vector
\begin{equation}
\mathbf c(\mathbf x,\mathbf s)
=
[c_1(\mathbf x,\mathbf s),\ldots,c_N(\mathbf x,\mathbf s)]^T
\in\mathbb C^{N\times1},
\end{equation}
where $\mathbf c^H(\mathbf x,\mathbf s) = \mathbf h^H\mathbf G$.
Since $\mathbf G=\mathrm{blkdiag}(\mathbf g_1,\ldots,\mathbf g_N)$ and $\mathbf h^H=[\mathbf h_1^H,\ldots,\mathbf h_N^H]$, the $n$-th waveguide effective channel is given by
\begin{equation}
c_n(\mathbf x,\mathbf s)
=\mathbf h_n^H\mathbf g_n=
\sum_{l=1}^{L}
\beta_{l,n}(\mathbf s_n)
\xi_{l,n}(x_{l,n}),
\label{eq:cn_MWSU}
\end{equation}
where $\mathbf s_n=[s_{1,n},\ldots,s_{L,n}]^T$ denotes the coupling spacing vector on waveguide $n$, $\beta_{l,n}(\mathbf s_n)$ is the cascaded radiation coefficient of PA $(l,n)$, and $\xi_{l,n}(x_{l,n}) \triangleq h_{l,n}^{H}(x_{l,n}) \widetilde g_{l,n}(x_{l,n})$ is the position-dependent cascaded propagation coefficient of PA $(l,n)$. Therefore, we have
\begin{equation}
\|\mathbf c(\mathbf x,\mathbf s)\|_2^2
=
\sum_{n=1}^{N}
\left|
\sum_{l=1}^{L}
\beta_{l,n}(\mathbf s_n)
\xi_{l,n}(x_{l,n})
\right|^2.
\label{eq:channel_norm_MWSU}
\end{equation}

Based on the above auxiliary variables, the SNR constraint in \eqref{cons:MWSU_SNR} becomes
$\frac{\left|\mathbf c^H(\mathbf x,\mathbf s)\mathbf w\right|^2}{\sigma^2}\ge \Gamma$.
For fixed $\mathbf x$ and $\mathbf s$, the transmit beamforming subproblem is
\begin{equation}
\min_{\mathbf w}\ \|\mathbf w\|_2^2
\quad
\mathrm{s.t.}\quad
\left|\mathbf c^H(\mathbf x,\mathbf s)\mathbf w\right|^2
\ge \Gamma\sigma^2.
\label{eq:beamforming_subproblem_MWSU}
\end{equation}
The optimal solution follows MRT as
\begin{equation}
\mathbf w^*
=\sqrt{\Gamma\sigma^2}
\frac{\mathbf c(\mathbf x,\mathbf s)}
{\|\mathbf c(\mathbf x,\mathbf s)\|_2^2},
\label{eq:MRT_solution_MWSU}
\end{equation}
 and the corresponding minimum transmit power is
$\|\mathbf w^*\|_2^2= \frac{\Gamma\sigma^2}{\|\mathbf c(\mathbf x,\mathbf s)\|_2^2}$.

Accordingly, problem $\mathcal P_{\mathrm S}$ is equivalently transformed into
\begin{align}
\mathcal P_{\mathrm S}':\quad
\min_{\mathbf x,\mathbf s}\quad
&
P_{\mathrm{total}}^{\mathrm S}(\mathbf x,\mathbf s)
=
\frac{T_2}{T}
\frac{\Gamma\sigma^2}
{\|\mathbf c(\mathbf x,\mathbf s)\|_2^2}
+
\frac{1}{T}
\sum_{n\in\mathcal N}\sum_{l\in\mathcal L} E_{l,n}
\label{eq:MWSU_Ptotal_BnB}
\\
\mathrm{s.t.}\quad
&
x_{l,n}\in\mathcal{X}_{l,n}^{\mathrm{local}},
\,\forall l\in\mathcal L,\, n\in\mathcal N,\tag{\ref{eq:MWSU_Ptotal_BnB}a}
\label{cons:MWSU_x_set}
\\
&
s_{l,n}\in\mathcal S,
\,\forall l\in\mathcal L,\, n\in\mathcal N,\tag{\ref{eq:MWSU_Ptotal_BnB}b}
\label{cons:MWSU_s_set}
\\
&\eqref{cons:P0_motion}, \,\eqref{cons:P0_min_spacing}. \notag
\end{align}
Therefore, the continuous beamforming variable can be eliminated in closed form, and the resulting problem only involves the discrete PA movement and coupling spacing variables.

\vspace{-1.25em}
\subsection{Branch-and-Bound Algorithm}

In the BnB procedure, each node $\mathcal B$ is represented by the remaining candidate position and coupling spacing domains
\begin{equation}
\mathcal B =
\left\{
\mathcal X_{l,n}(\mathcal B),
\mathcal S_{l,n}(\mathcal B),
\forall l,n
\right\},
\end{equation}
where $\mathcal X_{l,n}(\mathcal B)$ and $\mathcal S_{l,n}(\mathcal B)$ denote the candidate position set and coupling spacing set of PA $(l,n)$ at node $\mathcal B$, respectively.
\subsubsection{Lower Bound}

By the triangle inequality, we have
\begin{equation}
|c_n(\mathbf x,\mathbf s)|
\le
\sum_{l=1}^{L}
|\beta_{l,n}(\mathbf s_n)|
|\xi_{l,n}(x_{l,n})|.
\end{equation}

For node $\mathcal B$, with $x\in\mathcal X_{l,n}(\mathcal B)$, define
\begin{equation}
r_{l,n}(\mathcal B) =
\max_{x}
|\xi_{l,n}(x)|,\,
d_{l,n}^{\min}(\mathcal B)=
\min_{x}\left|x-x_{l,n}^{\mathrm{prev}}\right|.
\end{equation}

For waveguide $n$, an optimistic upper bound on the magnitude of its effective channel is obtained by enumerating all remaining coupling spacing combinations:
\begin{equation}
C_n^{\mathrm{UB}}(\mathcal B) =
\max_{\substack{s_{l,n}\in\mathcal S_{l,n}(\mathcal B),\\ l=1,\ldots,L}}
\sum_{l=1}^{L}
|\beta_{l,n}(\mathbf s_n)|
\, r_{l,n}(\mathcal B).
\label{eq:Cn_UB}
\end{equation}

Then, an upper bound on the MRT channel gain is given by
\begin{equation}
\|\mathbf c(\mathbf x,\mathbf s)\|_2^2
\le
C^{\mathrm{UB}}(\mathcal B)
\triangleq
\sum_{n=1}^{N}
\left(C_n^{\mathrm{UB}}(\mathcal B)\right)^2.
\label{eq:C_UB}
\end{equation}

Accordingly, a valid lower bound of the objective value over node $\mathcal B$ is
\begin{equation}
f_{\mathrm{LB}}(\mathcal B)
=
\frac{T_2}{T}
\frac{\Gamma\sigma^2}
{C^{\mathrm{UB}}(\mathcal B)}
+
\frac{P_{\mathrm m}}{vT}
\sum_{n=1}^{N}\sum_{l=1}^{L}
d_{l,n}^{\min}(\mathcal B).
\label{eq:MWSU_LB}
\end{equation}

\vspace{-0.25em}
\subsubsection{Upper Bound}

The upper bound is obtained from a feasible solution  within node $\mathcal B$. Specifically, the current incumbent solution is first adjusted to satisfy the remaining position and coupling spacing domains of $\mathcal B$. Starting from this projected solution, coordinate descent is performed by alternately updating the coupling spacing and the position of each PA. For each update, the candidate yielding the smallest objective value in \eqref{eq:MWSU_Ptotal_BnB} is selected. In addition, several random feasible starts are refined in the same manner. Let $(\widehat{\mathbf x},\widehat{\mathbf s})$ denote the best feasible solution obtained in node $\mathcal B$. Then, $f_{\mathrm{UB}}(\mathcal B)= P_{\mathrm{total}}^{\mathrm S}(\widehat{\mathbf x},\widehat{\mathbf s})$
is a valid upper bound. The corresponding transmit beamforming vector is recovered from \eqref{eq:MRT_solution_MWSU}.

The proposed BnB algorithm follows a best-bound-first search rule. Specifically, at each iteration, the active node with the smallest local lower bound is selected. Let $\mathrm{GUB}$ denote the current global upper bound, i.e., the best feasible objective value obtained so far. If the selected node $\mathcal B$ satisfies $f_{\mathrm{LB}}(\mathcal B)\ge \mathrm{GUB}$,
then $\mathcal B$ is discarded, since all feasible configurations contained in this node have objective values no smaller than the current incumbent. Otherwise, a feasible upper-bound solution is constructed, and $\mathrm{GUB}$ is updated if a smaller objective value is obtained. The selected node is then further partitioned. In the branching step, the coupling spacing domains are first considered, since the spacing variables determine the cascaded radiation coefficients. If all spacing domains have been fixed, the algorithm branches over the position domain with the largest number of remaining candidates.
The algorithm terminates when $\frac{\mathrm{GUB}-\mathrm{GLB}} {\max\{1,|\mathrm{GUB}|\}} \le \epsilon$, where $\mathrm{GLB}=\min_{\mathcal B\in\mathcal A}f_{\mathrm{LB}}(\mathcal B)$ denotes the smallest lower bound among all active nodes in the active-node set $\mathcal A$. The following theorem establishes the finite convergence and optimality guarantee of the proposed BnB algorithm. The overall procedure of the proposed BnB algorithm for solving problem $\mathcal P_{\mathrm S}$ is summarized in Algorithm~\ref{alg:BnB_MWSU}.

\begin{theorem}\label{Theorem:BnB}
 The proposed BnB algorithm converges to an $\epsilon$-optimal solution of problem $\mathcal P_{\mathrm S}'$ in finite iterations.
\end{theorem}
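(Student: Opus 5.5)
The proof rests on three properties of the search: the lower bound in \eqref{eq:MWSU_LB} is valid, the lower bound is tight once a node has been reduced to a single configuration, and the search tree is finite. Combining these will give both finite termination and $\epsilon$-optimality.

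\emph{Validity of $f_{\mathrm{LB}}$.} Fix a node $\mathcal B$ and any feasible $(\mathbf x,\mathbf s)$ with $x_{l,n}\in\mathcal X_{l,n}(\mathcal B)$ and $s_{l,n}\in\mathcal S_{l,n}(\mathcal B)$.

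1. The triangle inequality together with $|\xi_{l,n}(x_{l,n})|\le r_{l,n}(\mathcal B)$ gives $|c_n(\mathbf x,\mathbf s)|\le\sum_l|\beta_{l,n}(\mathbf s_n)|\,r_{l,n}(\mathcal B)\le C_n^{\mathrm{UB}}(\mathcal B)$. The second inequality holds because $\mathbf s_n$ is one of the spacing combinations over which \eqref{eq:Cn_UB} maximizes.
2. Summing the squares over $n$ yields $\|\mathbf c\|_2^2\le C^{\mathrm{UB}}(\mathcal B)$. Hence $\frac{\Gamma\sigma^2}{\|\mathbf c\|_2^2}\ge\frac{\Gamma\sigma^2}{C^{\mathrm{UB}}(\mathcal B)}$.
3. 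For the motion term, $|x_{l,n}-x_{l,n}^{\mathrm{prev}}|\ge d^{\min}_{l,n}(\mathcal B)$, so by \eqref{eq:Emn_def} the motion energy is bounded below by the second term of \eqref{eq:MWSU_LB}.
4. Together these give $f_{\mathrm{LB}}(\mathcal B)\le P^{\mathrm S}_{\mathrm{total}}(\mathbf x,\mathbf s)$ for every feasible point in $\mathcal B$.

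It follows that pruning a node with $f_{\mathrm{LB}}(\mathcal B)\ge\mathrm{GUB}$ never discards a configuration strictly better than the incumbent. By induction over the iterations, $\mathrm{GLB}\le P^\star\le\mathrm{GUB}$ holds throughout, where $P^\star$ is the optimal value of $\mathcal P_{\mathrm S}'$.

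\emph{Finiteness.} The root domain is $\prod_{l,n}\mathcal X^{\mathrm{local}}_{l,n}\times\mathcal S^{LN}$, which is finite with cardinality at most $(IQ)^{LN}$.

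1. Each branching step splits a domain $\mathcal S_{l,n}(\mathcal B)$ or $\mathcal X_{l,n}(\mathcal B)$ with at least two elements into nonempty proper subsets. The cardinality of the node's domain product therefore strictly decreases.
2. Every branch path thus has length at most $\sum_{l,n}(|\mathcal X^{\mathrm{local}}_{l,n}|+Q)$, and the tree has finitely many nodes.
3. A leaf is a node in which every domain is a singleton. There $r_{l,n}(\mathcal B)=|\xi_{l,n}(x_{l,n})|$ and $d^{\min}_{l,n}(\mathcal B)=|x_{l,n}-x^{\mathrm{prev}}_{l,n}|$. However, the triangle inequality $|\sum_l\beta\xi|\le\sum_l|\beta||\xi|$ can still be strict, so $f_{\mathrm{LB}}$ need not equal the true objective at a leaf.

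\emph{Main obstacle: leaves.} I expect the handling of leaves to be the main difficulty, since the bound is not exact there. I would argue it as follows. At a leaf, the upper-bound step evaluates the unique feasible configuration (if it also satisfies \eqref{cons:P0_min_spacing}), so its exact value enters $\mathrm{GUB}$. The leaf is then either closed or replaced by its exact value, i.e., we set $f_{\mathrm{LB}}(\mathcal B)=P^{\mathrm S}_{\mathrm{total}}$ on singletons. Leaves violating \eqref{cons:P0_min_spacing} are infeasible and are discarded. Since the tree is finite, every node is eventually either pruned or a leaf so resolved. At that point every remaining active node has $f_{\mathrm{LB}}\ge\mathrm{GUB}$, or the set $\mathcal A$ is empty and $\mathrm{GUB}=P^\star$ exactly. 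Therefore the gap criterion $\frac{\mathrm{GUB}-\mathrm{GLB}}{\max\{1,|\mathrm{GUB}|\}}\le\epsilon$ is met after finitely many iterations.

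\emph{Conclusion.} At termination, $\mathrm{GUB}-P^\star\le\mathrm{GUB}-\mathrm{GLB}\le\epsilon\max\{1,|\mathrm{GUB}|\}$. The incumbent configuration, with $\mathbf w^*$ recovered from \eqref{eq:MRT_solution_MWSU}, is therefore $\epsilon$-optimal.
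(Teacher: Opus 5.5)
Your proposal is correct and follows essentially the same route as the paper's proof: validity of $f_{\mathrm{LB}}$ via the triangle inequality and $d^{\min}_{l,n}(\mathcal B)$, correctness of pruning, finiteness of the branching tree over the finite domains, and $\epsilon$-optimality from the relative-gap stopping criterion. Your explicit treatment of singleton nodes and of termination with $\mathcal A=\emptyset$ fills in points the paper passes over; at singletons, as you rightly note, the triangle-inequality bound need not be tight, so exactness must come from the upper-bound evaluation, which also means the ``tight at a single configuration'' property in your opening sentence should be dropped.
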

\begin{IEEEproof}
    See Appendix \ref{proof:Theorem}.
\end{IEEEproof}

\begin{algorithm}[t]
\caption{BnB-Based Algorithm for the MWSU Scenario}
\label{alg:BnB_MWSU}
\begin{algorithmic}[1]
\Require Feasible PA position sets and coupling spacing sets
\Ensure $\epsilon$-optimal solution $(\mathbf x^\star,\mathbf s^\star,\mathbf w^\star)$

\State Initialize the root node $\mathcal B_0$ and set the active node set as $\mathcal A=\{\mathcal B_0\}$
\State Compute $f_{\mathrm{LB}}(\mathcal B_0)$ and a feasible upper-bound solution of $\mathcal B_0$
\State Initialize $\mathrm{GUB}$ as the best feasible objective value and set $\mathrm{GLB}=f_{\mathrm{LB}}(\mathcal B_0)$

\While{$\mathcal A\neq\emptyset$ and 
$\frac{\mathrm{GUB}-\mathrm{GLB}}{\max\{1,|\mathrm{GUB}|\}}>\epsilon$}
    \State Select the active node $\mathcal B$ with the smallest lower bound and remove it from $\mathcal A$
    \If{$f_{\mathrm{LB}}(\mathcal B)\ge \mathrm{GUB}$}
        \State Discard node $\mathcal B$
    \Else
        \State Compute a feasible upper-bound solution within $\mathcal B$
        \State Update the incumbent solution and $\mathrm{GUB}$ if a smaller objective value is obtained
        \State Branch node $\mathcal B$ over the remaining coupling spacing or position domains
        \State Compute the lower bounds of the generated child nodes and insert the non-pruned child nodes into $\mathcal A$
    \EndIf
    \State Update $\mathrm{GLB}=\min_{\mathcal B'\in\mathcal A} f_{\mathrm{LB}}(\mathcal B')$
\EndWhile

\State Recover the MRT beamformer $\mathbf w^\star$ according to \eqref{eq:MRT_solution_MWSU}
\State Output $(\mathbf x^\star,\mathbf s^\star,\mathbf w^\star)$
\end{algorithmic}
\end{algorithm}

The computational complexity of the proposed BnB algorithm mainly depends on the number of nodes explored during the search. Let $I_{\mathrm m}=\max_{l,n}|\mathcal I_{l,n}^{\mathrm{local}}|$ denote the maximum number of feasible positions for a single PA, and let $M=LN$ denote the total number of PAs. Since each PA has at most $I_{\mathrm m}$ feasible positions and $Q$ coupling spacing levels, the exhaustive search space contains at most $(I_{\mathrm m}Q)^M$ feasible configurations.
For each processed node, the lower-bound evaluation requires computing the propagation-related bounds over the feasible position domains and enumerating the feasible coupling spacing combinations on each waveguide, which leads to the complexity
$\mathcal O(NL I_{\mathrm m}+NQ^L)$.
The feasible upper-bound solution is obtained by coordinate-descent refinement. With $T_{\mathrm{cd}}$ denoting the number of coordinate-descent iterations, the corresponding complexity is approximately $\mathcal O\left(T_{\mathrm{cd}}M(I_{\mathrm m}+Q)\right)$.
Therefore, the total computational complexity can be expressed as $\mathcal O\left(N_{\mathrm{node}}\left(NL I_{\mathrm m}+NQ^L+T_{\mathrm{cd}}M(I_{\mathrm m}+Q)\right)\right)$.
In the worst case, the BnB search may degenerate into exhaustive enumeration over all feasible discrete configurations, whose size is on the order of
$\mathcal O\left((I_{\mathrm m}Q)^M\right)$.
Nevertheless, in practice, $N_{\mathrm{node}}$ is usually much smaller than $(I_{\mathrm m}Q)^M$ due to the proposed lower-bound pruning.

\section{GA-PSO-based Algorithm for the MWMU scenario}
\label{sec:GA-PSO_SOCP}

Problem $\mathcal P_0$ is a MINLP problem due to the binary PA position variables $\boldsymbol{\Phi}$, the binary coupling spacing selection variables $\mathbf V$, and the continuous transmit beamforming matrix $\mathbf W$. Moreover, these variables are strongly coupled in the equivalent PASS channel and the SINR constraints, which makes the direct joint optimization computationally prohibitive. To obtain high-quality feasible solutions with scalable complexity, we propose a GA-assisted particle swarm optimization framework with SOCP-based beamforming reconstruction, termed the GA-PSO algorithm. The proposed algorithm adopts a two-layer structure. In the outer layer, a block-structured GA-PSO algorithm searches over the discrete PA position and radiation-control variables through continuous latent representations. In the inner layer, for each decoded PASS structural configuration, the transmit beamforming matrix is optimally reconstructed by solving an SOCP problem.

\vspace{-1.05em}
\subsection{SOCP-Based Transmit Beamforming Optimization}

For given PA position and radiation-control configurations $(\boldsymbol{\Phi},\mathbf V)$, the original problem $\mathcal P_0$ reduces to the following transmit beamforming optimization problem:
\begin{align}
\min_{\mathbf W}\quad
& \sum_{k\in\mathcal K}\|\mathbf w_k\|_2^2
\label{eq:beamforming_sub}\\
\mathrm{s.t.}\quad
& \frac{|\mathbf h_k^H\mathbf G\mathbf w_k|^2}
{\sum_{j\neq k}|\mathbf h_k^H\mathbf G\mathbf w_j|^2+\sigma^2}
\geq \Gamma_k, \forall k.
\tag{\ref{eq:beamforming_sub}a}
\end{align}

Let $\widetilde{\mathbf h}_k^H \triangleq \mathbf h_k^H\mathbf G$ denote the equivalent cascaded channel for user $k$. By exploiting the phase-rotation invariance of the SINR constraints, the desired signal term $\widetilde{\mathbf h}_k^H\mathbf w_k$ can be constrained to be real-valued without loss of optimality. Therefore, problem \eqref{eq:beamforming_sub} can be equivalently reformulated as the following SOCP:
\begin{align}
\underset{\{\mathbf w_k\}_{k=1}^K}{\min}\quad
& \sum_{k\in\mathcal K}\|\mathbf w_k\|_2^2
\label{eq:SOCP_sub_obj}\\
\mathrm{s.t.}\quad
& \Re\{\widetilde{\mathbf h}_k^H\mathbf w_k\}
\geq
\sqrt{\Gamma_k}
\left\|
\left[
\{\widetilde{\mathbf h}_k^H\mathbf w_j\}_{j\neq k},
\sigma
\right]
\right\|_2,\, \forall k,
\tag{\ref{eq:SOCP_sub_obj}a}
\label{eq:SOCP_sub_soc}\\
& \Im\{\widetilde{\mathbf h}_k^H\mathbf w_k\}=0,
\,  \forall k.
\tag{\ref{eq:SOCP_sub_obj}b}
\label{eq:SOCP_sub_imag}
\end{align}
Problem \eqref{eq:SOCP_sub_obj} is a standard SOCP and can be efficiently solved by convex optimization solvers such as CVX. Hence, the proposed algorithm only performs heuristic search over the discrete PASS structural variables, while the continuous transmit beamforming is optimally reconstructed for each candidate configuration.

\subsection{Particle Encoding and Discrete Decoding}

The proposed GA-PSO algorithm jointly optimizes the PA position variables $\boldsymbol{\Phi}$ and the radiation-control variables $\mathbf V$. Each particle $r$ at iteration $t$ is represented by two continuous latent blocks:
\begin{equation}
    \mathbf z_r^{(t)}
    =
    \left[
    \mathbf z_{r,X}^{(t)},
    \mathbf z_{r,V}^{(t)}
    \right],
\end{equation}
where $\mathbf z_{r,X}^{(t)}\in\mathbb R^{L\times N}$ denotes the PA-position proxy block, and $\mathbf z_{r,V}^{(t)}\in\mathbb R^{L\times N\times Q}$ denotes the radiation-control score block.

For the position block, each continuous proxy $z_{r,X,l,n}^{(t)}$ is first clipped into the local feasible movement interval of the $(l,n)$-th PA and then projected onto the nearest feasible mounting point in $\mathcal X_{l,n}^{\mathrm{local}}$. A sorting-and-repair procedure is further applied on each waveguide to satisfy the minimum spacing constraint in \eqref{cons:P0_min_spacing}. The decoded discrete position index is denoted by $i_{l,n}$, and the corresponding binary position variable is constructed as
\begin{equation}
    \phi_{l,n,i} =
    \begin{cases}
    1, & i=i_{l,n},\\
    0, & \text{otherwise}.
    \end{cases}
\end{equation}

For the radiation-control block, the selected coupling spacing level is determined by the maximum-score criterion:
\begin{equation}
    q_{l,n}^{(t)}   =
    \arg\max_{q\in\mathcal Q}
    z_{r,V,l,n,q}^{(t)}.
\end{equation}
Then, the corresponding binary radiation-control variable is given by
\begin{equation}
    v_{l,n,q}
    =
    \begin{cases}
    1, & q=q_{l,n}^{(t)},\\
    0, & \text{otherwise}.
    \end{cases}
\end{equation}
Based on the decoded spacing level, the coupling spacing, coupling coefficient, and cascaded radiation ratio are obtained according to \eqref{eq:kappa_discrete} and \eqref{eq:beta_discrete}.

\subsection{Block-Structured PSO Update}

Let $\mathbf u_{r,X}^{(t)}$ and $\mathbf u_{r,V}^{(t)}$ denote the velocities of the position and radiation-control blocks, respectively. Let $\mathbf p_{r,X}$ and $\mathbf p_{r,V}$ denote the personal-best latent blocks of particle $r$, and let $\mathbf g_X$ and $\mathbf g_V$ denote the global-best latent blocks. The PSO update for the position block is given by
\begin{align}
\mathbf u_{r,X}^{(t+1)}
=&
\omega_X\mathbf u_{r,X}^{(t)}
+
c_{1,X}\mathbf R_{1,X}^{(t)}
\odot
\left(
\mathbf p_{r,X}-\mathbf z_{r,X}^{(t)}
\right)
\notag\\
&+
c_{2,X}\mathbf R_{2,X}^{(t)}
\odot
\left(
\mathbf g_X-\mathbf z_{r,X}^{(t)}
\right),
\end{align}
where $\omega_X$ is the inertia weight, $c_{1,X}$ and $c_{2,X}$ are the cognitive and social coefficients, $\mathbf R_{1,X}^{(t)}$ and $\mathbf R_{2,X}^{(t)}$ are random matrices with entries uniformly distributed in $[0,1]$, and $\odot$ denotes element-wise multiplication. The position latent block is then updated as
\begin{equation}
    \mathbf z_{r,X}^{(t+1)}
    = \mathbf z_{r,X}^{(t)} + \mathbf u_{r,X}^{(t+1)}.
\end{equation}

Similarly, the radiation-control block is updated as
\begin{align}
\mathbf u_{r,V}^{(t+1)}
=&
\omega_V\mathbf u_{r,V}^{(t)}+c_{1,V}\mathbf R_{1,V}^{(t)}
\odot
\left(
\mathbf p_{r,V}-\mathbf z_{r,V}^{(t)}
\right)
\notag\\
&+
c_{2,V}\mathbf R_{2,V}^{(t)}
\odot
\left(
\mathbf g_V-\mathbf z_{r,V}^{(t)}
\right),
\end{align}
and
\begin{equation}
    \mathbf z_{r,V}^{(t+1)}
    =
    \mathbf z_{r,V}^{(t)}
    +
    \mathbf u_{r,V}^{(t+1)}.
\end{equation}

After the block-wise PSO update, each particle is decoded into $(\boldsymbol{\Phi},\mathbf V)$. Then, the cascaded channel matrix $\mathbf G$ is constructed, and the SOCP problem in \eqref{eq:SOCP_sub_obj} is solved to reconstruct the optimal transmit beamforming matrix $\mathbf W$. The fitness value of particle $r$ is defined as the resulting total average power consumption:
\begin{equation}
    F_r
    =
    \frac{T_2}{T}
    \sum_{k\in\mathcal K}\|\mathbf w_k^\star\|_2^2
    +
    \frac{P_{\mathrm m}}{vT}
    \sum_{n\in\mathcal N}\sum_{l\in\mathcal L}
    |x_{l,n}-x_{l,n}^{\mathrm{prev}}|.
\end{equation}
If the SOCP problem is infeasible or the decoded particle violates the discrete feasibility constraints, a sufficiently large penalty value is assigned to the particle. Based on the evaluated fitness values, the personal-best and global-best particles are updated according to
\begin{equation}
    \mathbf p_r
    \leftarrow
    \begin{cases}
    \mathbf z_r^{(t)}, & F_r^{(t)} < F_{p,r},\\
    \mathbf p_r, & \text{otherwise},
    \end{cases}
\end{equation}
and $\mathbf g  = \arg\min_{\mathbf p_r} F_{p,r}$, where $F_{p,r}$ denotes the personal-best fitness of particle $r$.

\subsection{GA-Assisted Offspring Generation}

To avoid premature convergence, a lightweight GA module is invoked every $T_{\rm GA}$ iterations after a warm-up period. The GA module generates $N_{\rm GA}$ offspring from elite or personal-best particles through crossover and mutation, and replaces inferior particles when improved fitness values are achieved.
First, parent particles are selected by tournament selection. For each parent selection, a subset of particles is randomly sampled, and the particle with the smallest fitness in the subset is selected. Let $(\mathbf z_{X}^{(1)},\mathbf z_{V}^{(1)})$ and $(\mathbf z_{X}^{(2)},\mathbf z_{V}^{(2)})$ denote two selected parents.

For the position block, waveguide-level crossover is used to preserve the structural ordering of PAs on each waveguide. Specifically, for each waveguide $n$, the offspring inherits the whole position vector of waveguide $n$ from the second parent with probability $p_{c,X}$:
\begin{equation}
    \mathbf z_{X,:,n}^{\rm child}
    =
    \begin{cases}
    \mathbf z_{X,:,n}^{(2)}, & \text{with probability } p_{c,X},\\
    \mathbf z_{X,:,n}^{(1)}, & \text{otherwise}.
    \end{cases}
\end{equation}
For the radiation-control block, gene-level crossover is adopted. For each PA, the corresponding spacing-score vector is inherited from the second parent with probability $p_{c,V}$:
\begin{equation}
    \mathbf z_{V,l,n,:}^{\rm child}
    =
    \begin{cases}
    \mathbf z_{V,l,n,:}^{(2)}, & \text{with probability } p_{c,V},\\
    \mathbf z_{V,l,n,:}^{(1)}, & \text{otherwise}.
    \end{cases}
\end{equation}

After crossover, mutation is applied to further diversify the offspring. For the position block, Gaussian perturbation is added with probability $p_{m,X}$:
\begin{equation}
    z_{X,l,n}^{\rm child}
    \leftarrow
    z_{X,l,n}^{\rm child}
    +
    \sigma_X\epsilon,\,
    \epsilon\sim\mathcal N(0,1).
\end{equation}
The mutated position proxy is then clipped into the local feasible interval and repaired through the discrete position decoder.

For the radiation-control block, mutation is applied with probability $p_{m,V}$. Let $q_0=\arg\max_{q\in\mathcal Q} z_{V,l,n,q}^{\rm child}$ be the current preferred spacing level. A new spacing level $\bar q$ is randomly selected from $\mathcal Q\setminus\{q_0\}$, and the score vector is perturbed as
\begin{equation}
    z_{V,l,n,\bar q}^{\rm child}
    \leftarrow
    z_{V,l,n,\bar q}^{\rm child}
    +
    \Delta_V,
\end{equation}
where $\Delta_V>0$ is a mutation strength used to promote the alternative spacing level. The offspring is then decoded into $(\boldsymbol{\Phi},\mathbf V)$, repaired if necessary, and evaluated through SOCP-based beamforming reconstruction. If the offspring achieves a smaller fitness value than the worst particle in the current swarm, it replaces that particle.

\begin{algorithm}[t]
\caption{Proposed GA-PSO Algorithm}
\label{alg:GA-PSO_SOCP}
\begin{algorithmic}[1]
\State Initialize particle population $\{\mathbf z_r^{(0)}\}_{r=1}^{N_{\rm p}}$, velocities $\{\mathbf u_r^{(0)}\}_{r=1}^{N_{\rm p}}$, and PSO parameters.
\State Decode and repair the initial particles to generate feasible PASS configurations.
\State Evaluate each particle by solving the SOCP problem in \eqref{eq:SOCP_sub_obj}.
\State Initialize personal-best particles $\{\mathbf p_r\}_{r=1}^{N_{\rm p}}$ and the global-best particle $\mathbf g$.
\For{$t=1,\ldots,I_{\max}$}
    \For{$r=1,\ldots,N_{\rm p}$}
        \State Update $\mathbf u_{r,X}^{(t)}$, $\mathbf z_{r,X}^{(t)}$, $\mathbf u_{r,V}^{(t)}$, and $\mathbf z_{r,V}^{(t)}$.
        \State Decode $\mathbf z_r^{(t)}$ into $(\boldsymbol{\Phi},\mathbf V)$.
        \State Repair infeasible PA positions and spacing-level selections.
        \State Construct the cascaded PASS channel matrix $\mathbf G$.
        \State Solve the SOCP problem in \eqref{eq:SOCP_sub_obj} to obtain $\mathbf W^\star$.
        \State Evaluate the fitness value $F_r^{(t)}$ using the total average power consumption.
        \State Update the personal-best particle $\mathbf p_r$.
    \EndFor
    \State Update the global-best particle $\mathbf g$.
    \If{$t$ satisfies the GA activation condition}
        \State Select parent particles by tournament selection.
        \State Generate offspring through structured crossover and mutation.
        \State Decode, repair, and evaluate offspring by SOCP.
        \State Replace inferior particles using elitist replacement.
    \EndIf
\EndFor
\State Output the optimized solution $(\boldsymbol{\Phi}^\star,\mathbf V^\star,\mathbf W^\star)$.
\end{algorithmic}
\end{algorithm}

\begin{remark}
The GA module is introduced to alleviate the premature convergence of PSO in the discrete PASS search space. Since the continuous latent particles are decoded into discrete PA positions and quantized coupling spacing levels, the resulting fitness landscape is highly discontinuous and multi-modal, making standard PSO prone to diversity loss and local stagnation. The crossover and mutation operations periodically inject structurally diverse offspring by recombining promising configurations and perturbing selected position or spacing variables. With elitist replacement, only improved offspring are retained, which enhances global exploration while preserving high-quality solutions.
\end{remark}

The overall GA-PSO algorithm is summarized in Algorithm~\ref{alg:GA-PSO_SOCP}.
The elitist update of the global-best particle ensures that the recorded best objective value of Algorithm~\ref{alg:GA-PSO_SOCP} is non-increasing over the iterations. Since the total average power consumption is lower bounded, the recorded global-best objective sequence is convergent. Moreover, for each decoded PASS configuration, the transmit beamforming subproblem is optimally solved by SOCP, which guarantees that the corresponding fitness value is evaluated with the optimal beamforming solution under the given discrete structural configuration. Nevertheless, the GA-PSO search over the discrete PA-position and coupling-spacing variables is heuristic, and thus global optimality is not claimed for the MWMU scenario. The proposed algorithm instead provides a scalable approach for obtaining high-quality feasible solutions with SOCP-based optimal beamforming reconstruction.

Let $N_{\rm p}$ denote the swarm size, $I_{\max}$ denote the maximum number of iterations, and $C_{\rm SOCP}$ denote the complexity of solving one SOCP problem in \eqref{eq:SOCP_sub_obj}. In the standard PSO update, each particle requires one SOCP evaluation per iteration. Moreover, if the GA module is activated every $T_{\rm GA}$ iterations and generates $N_{\rm GA}$ offspring at each activation, the additional SOCP evaluations caused by the GA module should also be considered. Therefore, the overall computational complexity can be approximated as $\mathcal O  \left(   I_{\max}N_{\rm p}C_{\rm SOCP} +   \frac{I_{\max}}{T_{\rm GA}}N_{\rm GA}C_{\rm SOCP} \right)$.
%The decoding, clipping, sorting, and repair operations only introduce low-order polynomial complexity with respect to $N$, $L$, and $Q$, and are therefore dominated by the repeated SOCP evaluations.

\vspace{-1.05em}
\section{Simulation Results}

In this section, numerical simulations are conducted to verify the effectiveness of the proposed PASS design. Unless otherwise specified, the simulation parameters are set according to \cite{xuJointRadiationPower2025}. The carrier frequency, noise power, in-waveguide attenuation coefficient, and effective refractive index are set as $f_c=28$ GHz, $\sigma^2=-80$ dBm, $\alpha_{\rm w}=0.01$, and $n_{\rm eff}=1.4$, respectively. A resource-limited scenario with $N=K=3$ and $L=4$ PAs per waveguide is considered.
The deployment region is $20$ m $\times$ $10$ m, while the BS height is fixed at $5$ m.
The minimum SINR requirement is $\gamma_{\min}=24$ dB. 
Moreover, three waveguides are deployed parallel to the $x$-axis, with feed-point coordinates given by $x_{n}^{\rm W}=0$ and $y_{n}^{\rm W}=\{0,5,10\}$. 
The durations for antenna movement and signal transmission are set as $T_1=0.2$ s and $T_2=0.8$ s, respectively. 
In addition, the motion power and antenna movement speed are set to $P_{\rm m}=0.1$ W and $v=1$ m/s, respectively.
The coupling model parameters are adopted from \cite{xuPinchingantennaSystemsPASS2025}.
A rectangular waveguide of cladding index $n_{\mathrm{clad}}=1.0$ and $2b=10$ mm is considered. 
The fitted parameters are $\alpha = 0.24615~\mathrm{mm}^{-1}$ and $\Omega_{0}=0.3300~\mathrm{mm}^{-1}$.
The effective pinching antenna length is set as $D^{\mathrm{PA}}=5~\mathrm{mm}$.
Accordingly, for $Q=6$, the set of spacing levels is $\mathcal S=\{0.1999,\,2.3626,\,3.8610,\,5.6664,\,8.5600,\,39.4572\}$ mm.

For performance comparison, the proposed scheme and the following baseline schemes are considered.
\begin{itemize} 
    \item \textbf{Adjustable coupling, discrete motion PASS (AC-DM-PASS, proposed):} 
    The proposed scheme jointly optimizes the localized discrete PA positions, adjustable coupling strength, and transmit beamforming. In particular, coupling strength is adopted to tune the radiation power at each PA through discrete coupling spacing levels.
    \item \textbf{Equal power, discrete motion PASS (DM-PASS):}  This baseline scheme adopts equal radiation power allocation for all PAs \cite{xuPinchingantennaSystemsPASS2025}. Specifically, the radiation power is uniformly allocated among the PAs, i.e., $\beta_{l,n}=\sqrt{\frac{1}{L}}$. The localized discrete PA positions and transmit beamforming are optimized via the proposed GA-PSO algorithm. 
   
    \item \textbf{Adjustable coupling, discrete activation PASS (DA-PASS):}  This baseline scheme considers a discrete activation PASS architecture with adjustable coupling strength, where all PAs are fixed at pre-mounted discrete positions  \cite{wangAntennaActivationResource2026}. The activation state of each PA is controlled by its coupling spacing to the waveguide, where an inactive PA corresponds to a near-zero radiation state. The discrete coupling spacing levels and transmit beamforming are jointly optimized using the proposed GA-PSO algorithm.
    %This baseline scheme considers a discrete activation PASS architecture with adjustable coupling strength and fixed discrete antenna positions. Specifically, all PAs are located at pre-mounted positions, and the activation state of each PA is controlled by its coupling spacing to the waveguide, where an inactive PA corresponds to a near-zero radiation state.
    \item \textbf{Hybrid MIMO:} A hybrid beamforming architecture \cite{songFullyPartiallyconnectedHybrid2019} is considered, where the BS is equipped with $N$ RF chains and each RF chain is connected to $L$ antennas through phase shifters. The hybrid beamforming coefficients are obtained by employing the penalty-based method proposed in \cite{shiSpectralEfficiencyOptimization2018}. Moreover, the conventional MIMO antenna array is deployed at the BS with half-wavelength antenna spacing. 
    \item \textbf{MIMO:} This baseline scheme corresponds to a conventional fully-digital MIMO system, where each RF chain is connected to a single antenna. Specifically, the BS is equipped with the same number of RF chains, i.e., $N$, as the considered PASS framework. 
\end{itemize}

\vspace{-1.05em}
\subsection{Single-User Scenario}

\begin{figure}[t]
  \centering
  \includegraphics[width=0.45\textwidth]{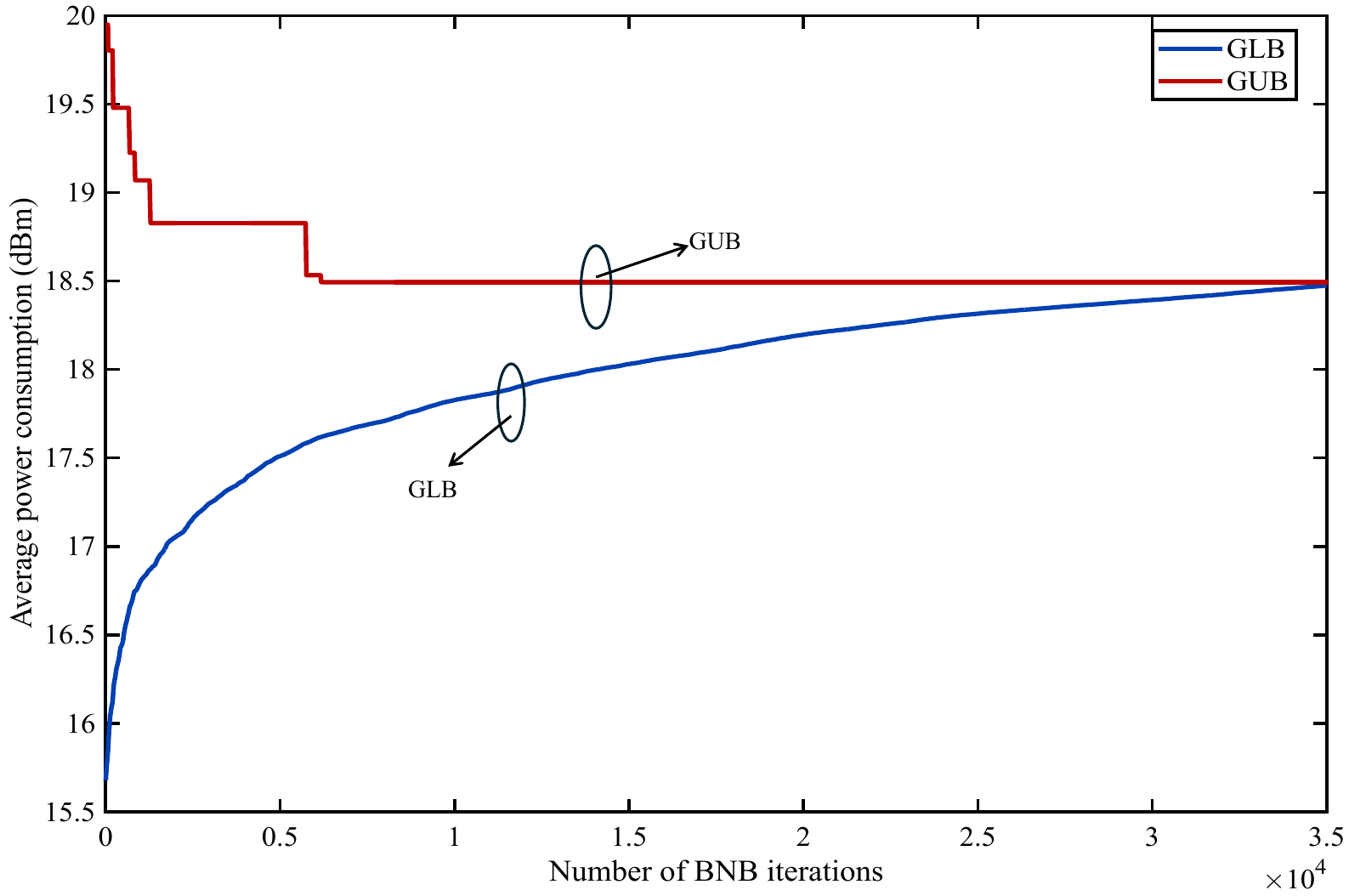}
  \caption{\label{figure:convergence}Convergence of BnB for the MWSU scenario.
  } 
   \vspace{-10pt} 
\end{figure}

\begin{figure}[t]
  \centering
  \includegraphics[width=0.45\textwidth]{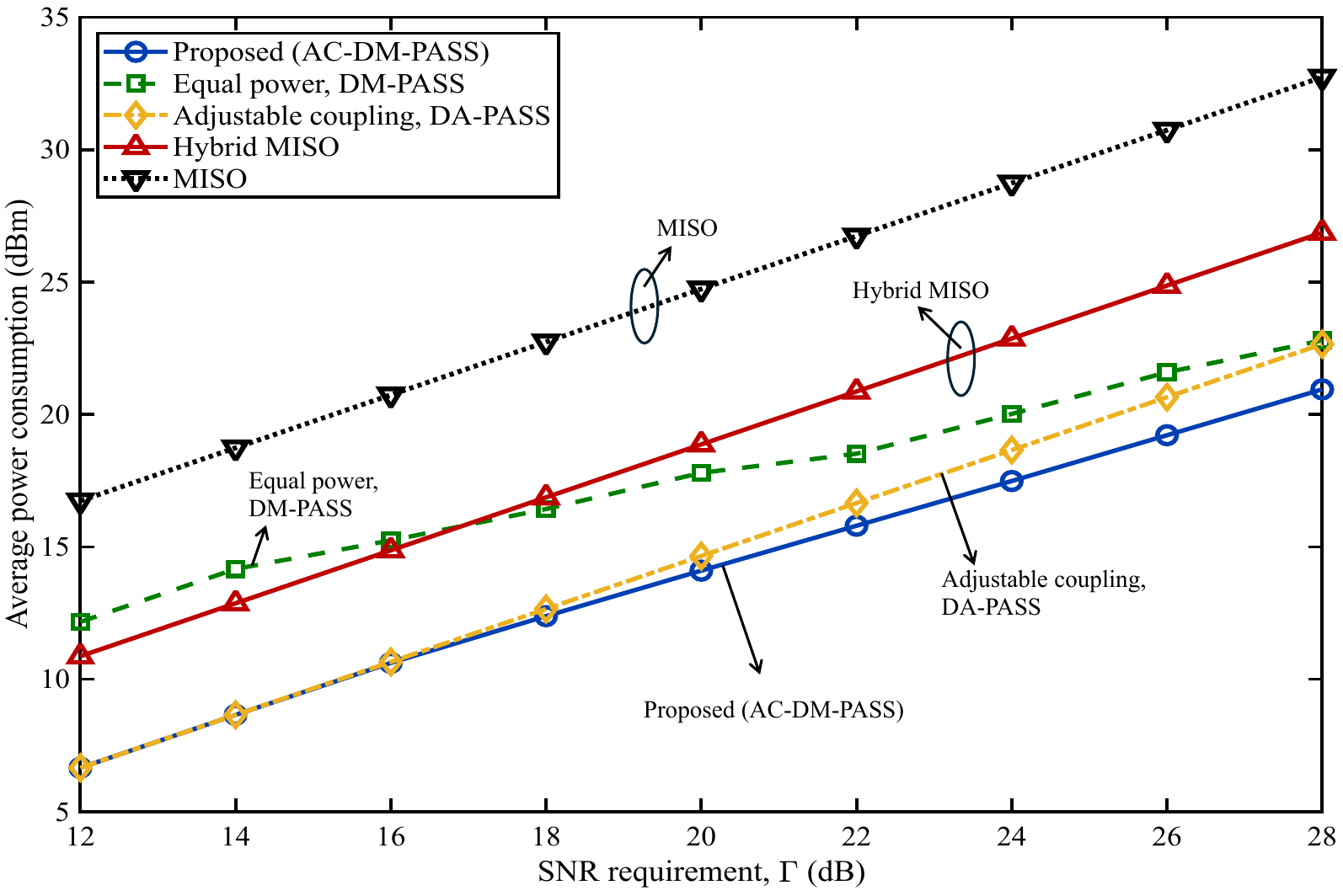}
  \caption{\label{figure:bnbsnr} Performance comparisons in the MWSU scenario.
  } 
   \vspace{-10pt} 
\end{figure}

Fig.~\ref{figure:convergence} demonstrates the convergence behavior of the proposed globally optimal BnB algorithm for the MWSU scenario. It can be observed that GLB increases monotonically as the branching process proceeds, while GUB gradually decreases. Moreover, the gap between the GUB and GLB continuously shrinks and eventually converges within the prescribed tolerance, indicating that the proposed BnB algorithm can guarantee convergence to the globally optimal solution within a finite number of iterations. Compared to exhaustive search, which requires evaluating approximately $5.31\times10^{17}$ feasible configurations in the considered setup, the proposed BnB algorithm converges after approximately $3.5\times10^4$ iterations, demonstrating the effectiveness of the proposed bound-pruning strategy in significantly reducing the search complexity. Nevertheless, the computational complexity of the globally optimal BnB algorithm still increases rapidly with the number of PAs, discrete candidate positions, and  coupling spacing levels. Therefore, for the MWMU scenario, a low-complexity GA-PSO algorithm is further proposed to efficiently handle the highly coupled joint optimization of PA positions, coupling strength, and transmit beamforming.

Fig.~\ref{figure:bnbsnr} compares the system performance of different architectures under different SNR requirements $\Gamma$ in the MWSU scenario. The required power consumption increases monotonically with $\Gamma$ for all schemes, while the proposed PASS consistently achieves the lowest power consumption. Compared with the equal-power DM-PASS scheme, the proposed design significantly reduces the power consumption by adaptively controlling the radiation power of different PAs, which avoids inefficient uniform power radiation. Compared with the adjustable coupling DA-PASS, the proposed scheme reduces the required power consumption, especially in the high-SNR regime, demonstrating the importance of PA position optimization under stringent QoS requirements. Moreover, both conventional MISO and hybrid MISO require substantially higher power consumption due to their fixed antenna structures and limited capability of reconfiguring large-scale path loss.

\vspace{-1.05em}
\subsection{Multi-User Scenario}

Fig.~\ref{figure:sinr} illustrates the average power consumption versus the minimum SINR requirement $\Gamma_k$ for different architectures. It can be observed that the required power consumption increases monotonically with $\Gamma_k$ for all schemes, since higher SINR requirements demand stronger signal enhancement and interference suppression. Among all compared schemes, the proposed PASS consistently achieves the lowest power consumption. Compared with equal power DM-PASS, the proposed design achieves a significant power reduction of up to $42.2\%$, demonstrating that adjustable radiation power control provides additional DoFs for more efficient power allocation among PAs. The performance gain is more pronounced in the low-SINR regime, where flexible radiation power allocation plays a dominant role in reducing the required transmit power. Moreover, compared with the adjustable coupling DA-PASS scheme, the proposed scheme reduces the power consumption by up to $39.7\%$, which verifies the importance of practical PA position optimization in reconfiguring the large-scale path loss and propagation phases. The corresponding gain becomes more evident at high SINR requirements, since PA position optimization provides additional spatial DoFs for reconfiguring the large-scale path loss and propagation phases, while also reducing the inter-user channel correlation. In contrast, the proposed PASS architecture achieves up to $99\%$ power reduction compared with conventional MIMO and hybrid MIMO schemes, highlighting the remarkable capability of PASS in mitigating large-scale path loss and enhancing spatial beamforming flexibility through reconfigurable pinching beamforming.

\begin{figure}[t]
  \centering
  \includegraphics[width=0.45\textwidth]{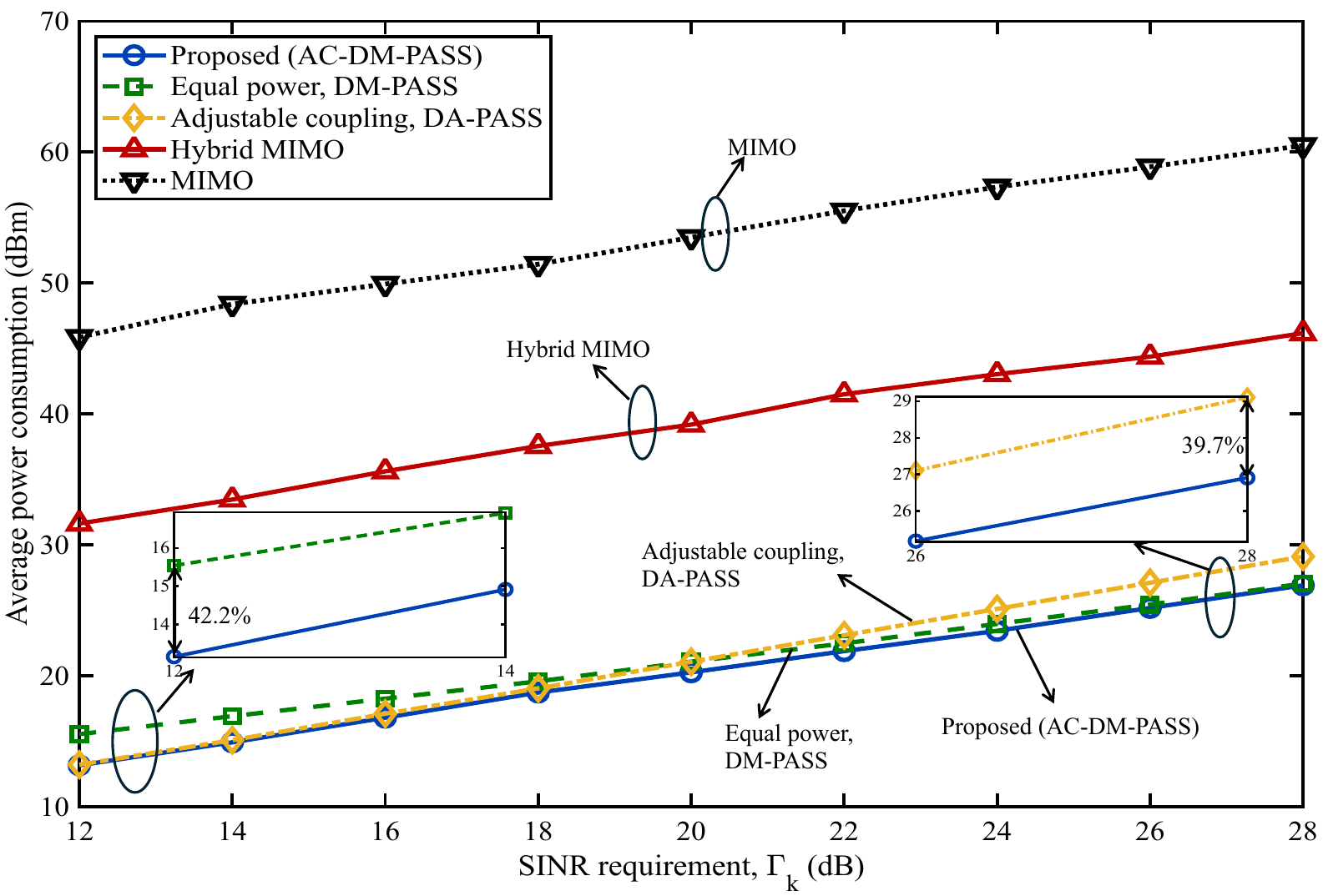}
  \caption{\label{figure:sinr}Average power consumption versus SINR requirement.
  } \vspace{-10pt} 
\end{figure}

\begin{figure}[t]
  \centering
  \includegraphics[width=0.45\textwidth]{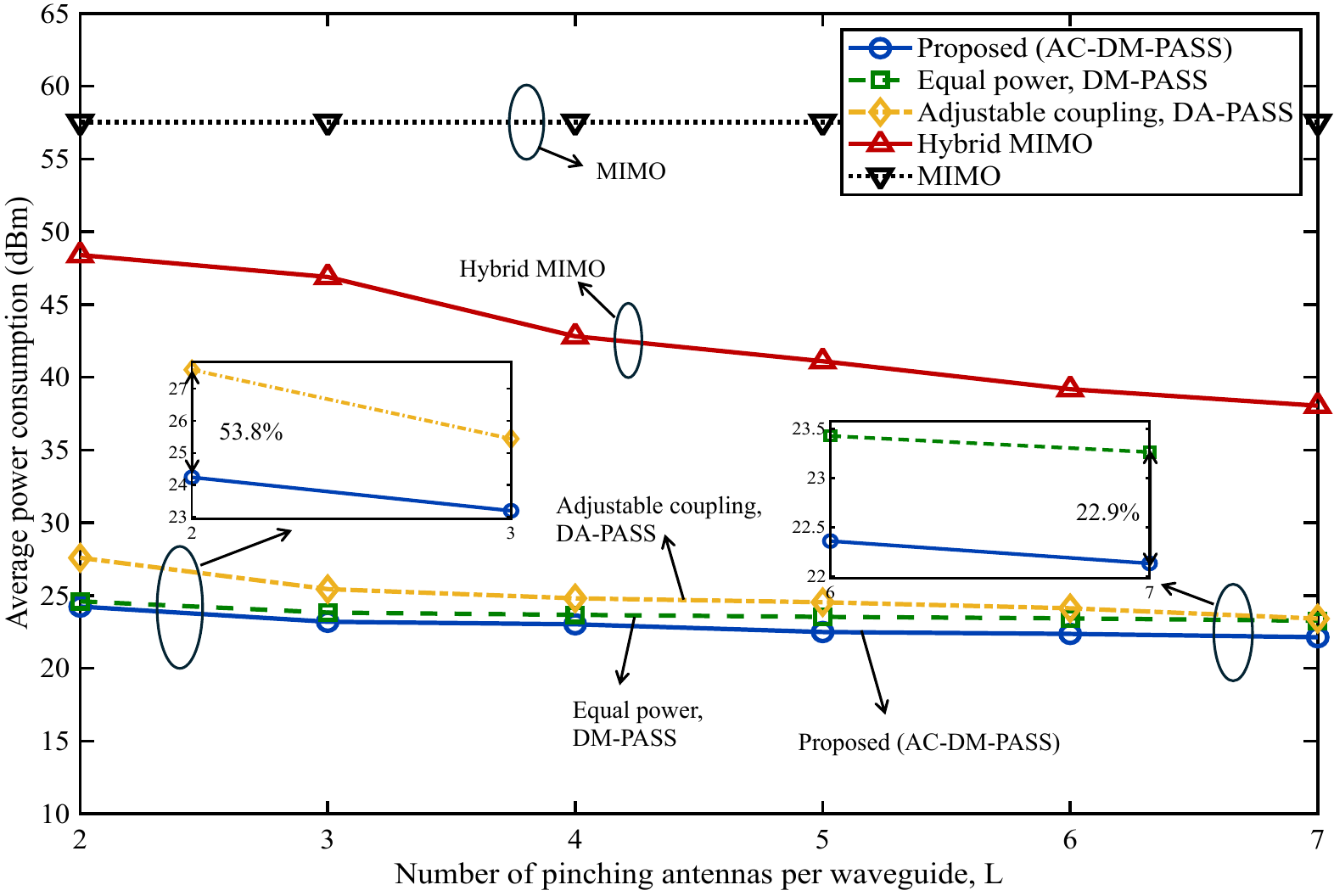}
  \caption{\label{figure:L}Average power consumption  versus number of PAs per waveguide.
  } \vspace{-10pt} 
\end{figure}

\begin{figure}[t]
  \centering
  \includegraphics[width=0.45\textwidth]{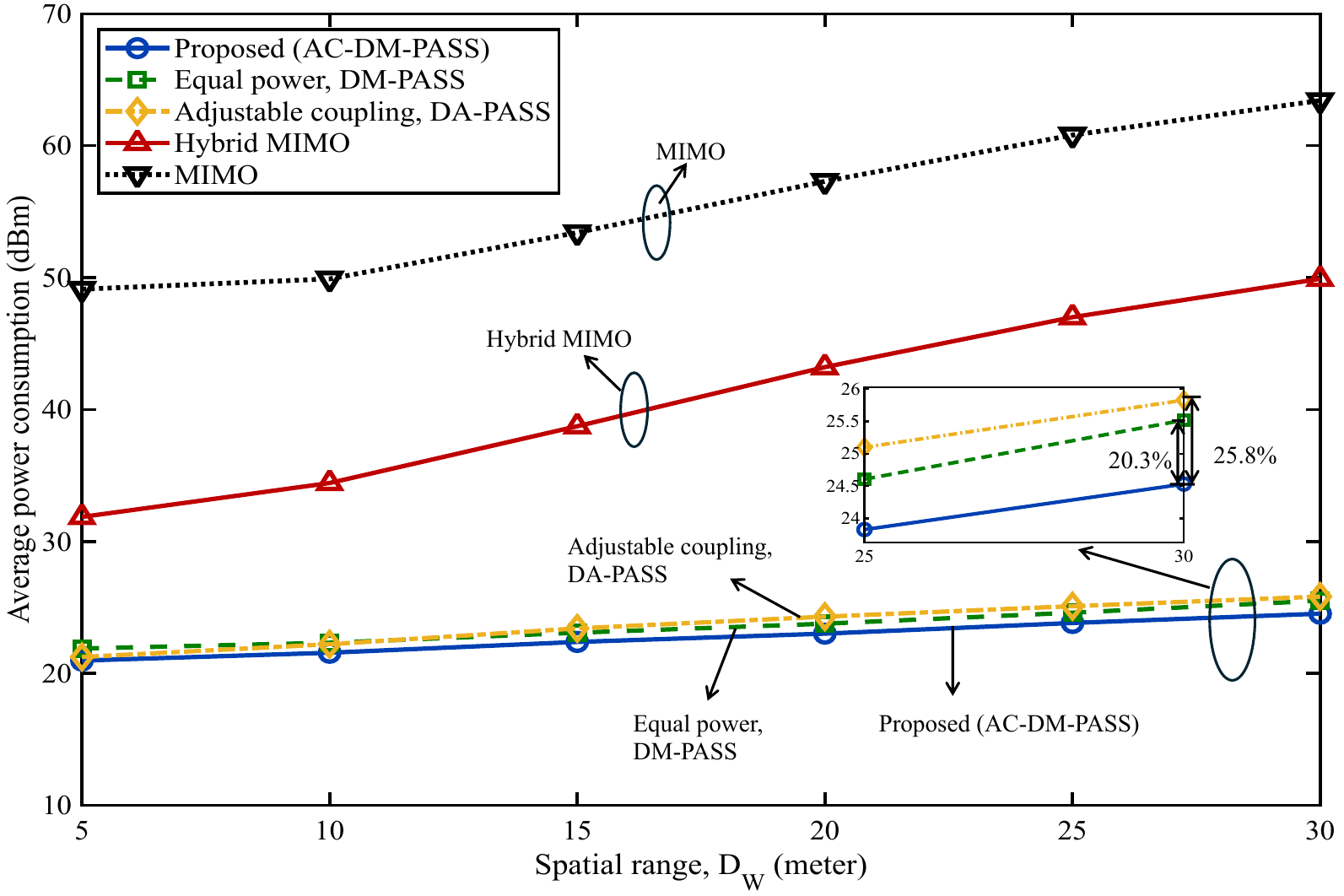}
  \caption{\label{figure:SpatialRange}Average power consumption versus spatial range.
  } \vspace{-10pt} 
\end{figure}

\begin{figure}[t]
  \centering
  \includegraphics[width=0.45\textwidth]{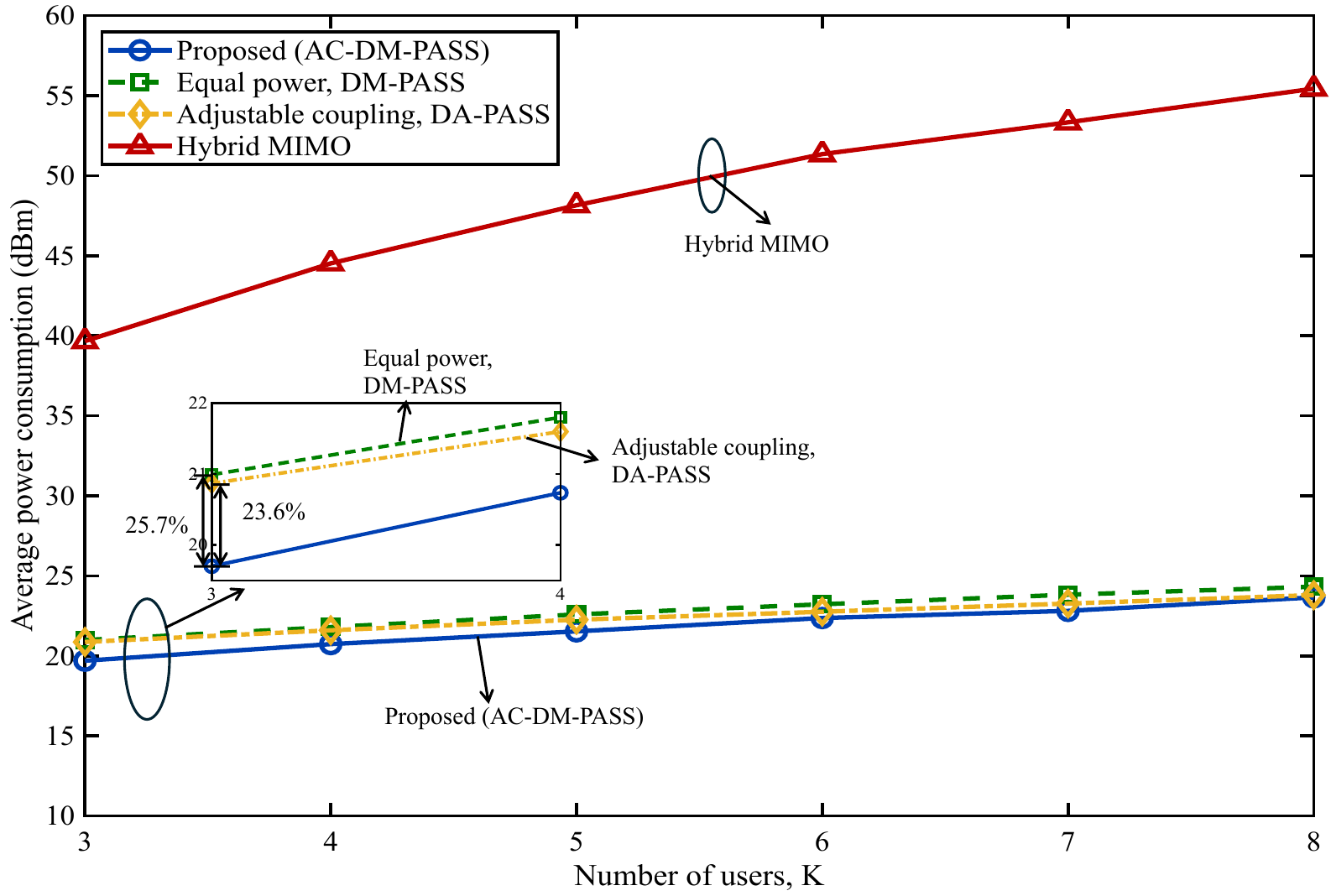}
  \caption{\label{figure:userK}Average power consumption versus  number of users.
  } \vspace{-10pt} 
\end{figure}

\begin{figure}[t]
  \centering
  \includegraphics[width=0.45\textwidth]{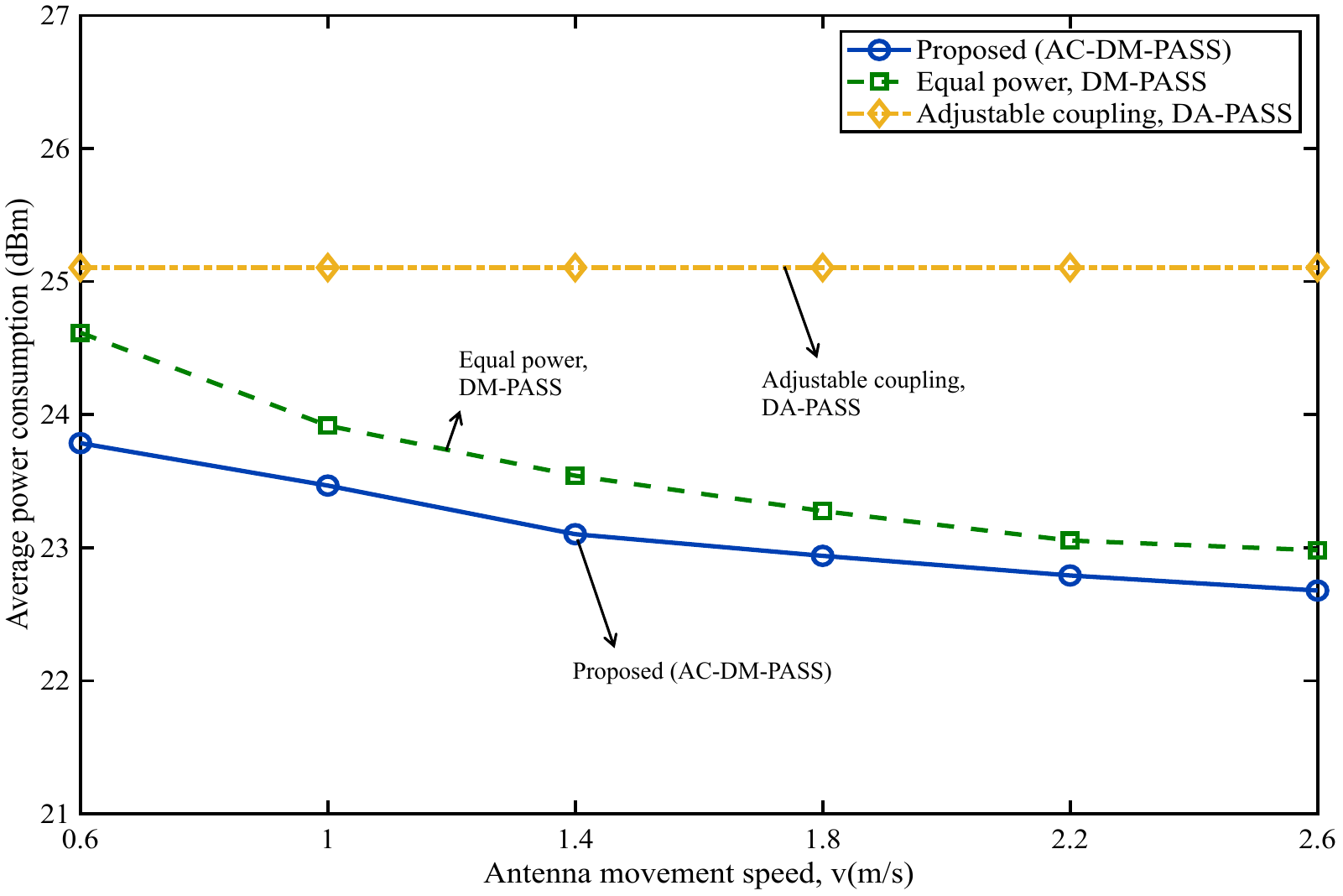}
  \caption{\label{figure:vmove}Average power consumption versus antenna movement speed.
  }
\end{figure}

Fig.~\ref{figure:L} illustrates the impact of the number of PAs per waveguide $L$. The average power consumption decreases with increasing $L$ for all PASS-based schemes, since more PAs provide denser radiation points and greater spatial flexibility for pinching beamforming and LoS link construction. Nevertheless, the performance gain gradually saturates as $L$ increases, since the additional spatial gain achieved by increasing the number of PAs becomes marginal. Moreover, the proposed scheme consistently achieves the lowest power consumption among all considered schemes. Compared with the adjustable coupling DA-PASS, the proposed AC-DM-PASS reduces the average power consumption by up to $53.78\%$, demonstrating that PA position optimization becomes increasingly important when the number of PAs is limited. In contrast, conventional MIMO and hybrid MIMO still require substantially higher power consumption. By dynamically adjusting radiation points close to users and flexibly adjusting the spacing levels for radiation power control, the proposed PASS framework significantly improves channel conditions and reduces the required transmit power.

Fig.~\ref{figure:SpatialRange} illustrates the average power consumption versus the spatial range $D^{\mathrm{W}}$ for different architectures. It can be observed that the required power consumption increases with $D^{\mathrm{W}}$ for all schemes due to the increased large-scale path loss in wider coverage areas. In contrast, the proposed PASS framework consistently achieves the lowest power consumption. This is because the proposed scheme jointly optimizes the PA positions, coupling spacing levels, and transmit beamforming, enabling the PAs to move closer to users and establish stronger LoS links. Compared with the equal power DM-PASS and adjustable coupling DA-PASS baselines, the proposed scheme achieves approximately $20.3\%$ and $25.8\%$ power reduction at $D^{\mathrm{W}}=30$ m, respectively. Moreover, the performance gain becomes more significant as the spatial range increases, which demonstrates the superiority of practical PA position optimization and adjustable radiation power control in wide-area deployment scenarios.

Fig.~\ref{figure:userK} illustrates the system performance of different architectures versus the number of users $K$ with  $\Gamma_k = 20$ dB. It can be observed that the average power consumption increases with $K$ for all considered schemes, since serving more users introduces stronger inter-user interference and requires additional transmit power to satisfy all users' SINR requirements. Nevertheless, the proposed PASS framework consistently achieves the lowest power consumption among all compared schemes, demonstrating the effectiveness of practical PA position optimization and adjustable radiation power control in reducing inter-user channel correlation and improving spatial beamforming flexibility. Compared with the equal power baseline and the fixed-position baseline, the proposed scheme achieves approximately $25.7\%$ and $23.6\%$ power reduction at $K=3$, respectively.  In contrast, conventional MIMO and hybrid MIMO architectures suffer from substantially higher power consumption due to their fixed antenna structures and limited capability in reconfiguring the wireless propagation environment.

Fig.~\ref{figure:vmove} illustrates the impact of the antenna movement speed $v$ on the average power consumption of different PASS-based transmission schemes. It can be observed that the average power consumption decreases as $v$ increases. This is because a larger movement speed enlarges the feasible movement region of each PA, thereby improving spatial flexibility for PA deployment and reducing the required transmit power. In addition, the motion power consumption is inversely proportional to $v$, such that increasing $v$ directly reduces the movement power cost. Moreover, the proposed scheme consistently achieves the lowest power consumption among all considered schemes. In contrast, adjustable coupling DA-PASS scheme remains unchanged with increasing $v$ since the PA positions are fixed and no movement power consumption is involved. Nevertheless, higher movement speeds generally require more sophisticated hardware and higher implementation costs, which introduces a tradeoff between hardware complexity and system performance.

\section{Conclusion} \label{sec:conclusion}

This paper proposed a practical PASS framework that enables discrete radiation power control and localized discrete antenna movement. The coupling strength adjustment was realized by tuning the coupling spacing between each PA and the waveguide, which provides additional DoFs for flexible radiation power control. Moreover, the position of each PA is selected from a set of discrete mounting points within a localized region determined by the PA movement speed and movement duration. By incorporating practical in-waveguide attenuation and antenna motion power consumption, the PA positions, coupling strength, and transmit beamforming were jointly optimized to minimize the total power consumption subject to users' SINR requirements and PAs' localized motion constraints. A globally optimal BnB algorithm was proposed for the MWSU scenario, with theoretical guarantees on convergence and optimality. To reduce the computational complexity, we further proposed a GA-PSO algorithm for the MWMU scenario, where GA-assisted PSO optimized the discrete variables and SOCP reconstructed the transmit beamforming. Simulation results demonstrated that the proposed design significantly reduced the average power consumption compared with existing PASS schemes and conventional MIMO architectures.

\appendix

\subsection{Proof of Theorem \ref{Theorem:BnB}}\label{proof:Theorem}

Let $\mathcal F$ denote the feasible set of problem $\mathcal P_{\mathrm S}'$, i.e.,
\begin{equation}
\mathcal F
\triangleq
\left\{
(\mathbf x,\mathbf s)\,\middle|\,
x_{l,n}\in\mathcal X_{l,n}^{\mathrm{local}},
s_{l,n}\in\mathcal S,\,
\eqref{cons:P0_motion},\,\eqref{cons:P0_min_spacing}
\right\}.
\end{equation}
Since each $\mathcal X_{l,n}^{\mathrm{local}}$ and $\mathcal S$ is finite, $\mathcal F$ is also finite. The globally optimal objective value of $\mathcal P_{\mathrm S}'$ is denoted as
\begin{equation}
f^\star
\triangleq
\min_{(\mathbf x,\mathbf s)\in\mathcal F}
P_{\mathrm{total}}^{\mathrm S}(\mathbf x,\mathbf s).
\end{equation}
For any BnB node $\mathcal B$, let $\mathcal F(\mathcal B)\subseteq\mathcal F$ denote the subset of feasible configurations contained in $\mathcal B$, and define the local optimum over node $\mathcal B$ as
\begin{equation}
f^\star(\mathcal B)
\triangleq
\min_{(\mathbf x,\mathbf s)\in\mathcal F(\mathcal B)}
P_{\mathrm{total}}^{\mathrm S}(\mathbf x,\mathbf s).
\end{equation}

We first show the validity of the lower bound. For any $(\mathbf x,\mathbf s)\in\mathcal F(\mathcal B)$, the construction of $C^{\mathrm{UB}}(\mathcal B)$  is given by
$\|\mathbf c(\mathbf x,\mathbf s)\|_2^2 \le C^{\mathrm{UB}}(\mathcal B)$.
Moreover, by the definition of $d_{l,n}^{\min}(\mathcal B)$, we have
$\left|x_{l,n}-x_{l,n}^{\mathrm{prev}}\right| \ge d_{l,n}^{\min}(\mathcal B),\,\forall l,n$.
Based on Eq. (\ref{eq:Emn_def}) and Eq. (\ref{eq:Ptotal_def}), the objective follows that
\begin{align}
P_{\mathrm{total}}^{\mathrm S}(\mathbf x,\mathbf s) \ge
\frac{T_2}{T}
\frac{\Gamma\sigma^2}
{C^{\mathrm{UB}}(\mathcal B)}
+
\frac{P_{\mathrm m}}{vT}
\sum_{n=1}^{N}\sum_{l=1}^{L}
d_{l,n}^{\min}(\mathcal B)
= f_{\mathrm{LB}}(\mathcal B).
\end{align}
Therefore, the lower bound satisfies
$f_{\mathrm{LB}}(\mathcal B)\le f^\star(\mathcal B),\,\forall \mathcal B$,
which proves that $f_{\mathrm{LB}}(\mathcal B)$ is a valid local lower bound.

We then show the validity of the upper bound. The upper bound $f_{\mathrm{UB}}(\mathcal B)$ is computed from an explicitly feasible solution $(\widehat{\mathbf x},\widehat{\mathbf s})\in\mathcal F(\mathcal B)$ as
$f_{\mathrm{UB}}(\mathcal B) = P_{\mathrm{total}}^{\mathrm S}(\widehat{\mathbf x},\widehat{\mathbf s})$.
Thus, the upper bound satisfies
$f_{\mathrm{UB}}(\mathcal B)\ge f^\star(\mathcal B)$,
and the incumbent solution associated with $\mathrm{GUB}$ is always feasible for $\mathcal P_{\mathrm S}'$.

Let $\mathcal A$ denote the active-node set. The global lower bound is updated as
$\mathrm{GLB} = \min_{\mathcal B\in\mathcal A} f_{\mathrm{LB}}(\mathcal B)$,
while $\mathrm{GUB}$ is the best feasible objective value found so far. Since the active nodes together with the incumbent solution cover all non-pruned candidates that may improve the current solution, and since $f_{\mathrm{LB}}(\mathcal B)$ is valid for every active node, we have $\mathrm{GLB}\le f^\star\le \mathrm{GUB}$.

We next verify the correctness of the pruning operation. If a node $\mathcal B$ satisfies
$f_{\mathrm{LB}}(\mathcal B)\ge \mathrm{GUB}$,
then for any feasible configuration $(\mathbf x,\mathbf s)\in\mathcal F(\mathcal B)$, we have
\begin{equation}
P_{\mathrm{total}}^{\mathrm S}(\mathbf x,\mathbf s)
\ge
f_{\mathrm{LB}}(\mathcal B)
\ge
\mathrm{GUB}.
\end{equation}
Therefore, no feasible solution contained in $\mathcal B$ can improve the current incumbent, and discarding $\mathcal B$ does not affect the global optimality of the BnB search.

Then, the branching operation partitions the remaining candidate domains $\mathcal X_{l,n}(\mathcal B)$ and $\mathcal S_{l,n}(\mathcal B)$ into smaller non-overlapping subsets. Since all position and coupling spacing domains are finite, repeated branching can generate at most $\prod_{n\in\mathcal N}\prod_{l\in\mathcal L} |\mathcal X_{l,n}^{\mathrm{local}}| |\mathcal S|$ singleton configurations in the worst case. Therefore, the BnB tree contains a finite number of leaf nodes. Since each iteration either prunes a node or branches it into smaller finite subsets, the proposed BnB algorithm must terminate after a finite number of iterations.

At termination, let $(\mathbf x_\epsilon,\mathbf s_\epsilon)$ be the incumbent solution returned by the algorithm, with objective value
$f_\epsilon = P_{\mathrm{total}}^{\mathrm S}(\mathbf x_\epsilon,\mathbf s_\epsilon) =\mathrm{GUB}$.
Then, we have
\begin{equation}
0\le f_\epsilon-f^\star
\le
\mathrm{GUB}-\mathrm{GLB}.
\end{equation}
When the stopping criterion $\frac{\mathrm{GUB}-\mathrm{GLB}} {\max\{1,|\mathrm{GUB}|\}} \le \epsilon$ is satisfied, it follows that 
$f_\epsilon-f^\star \le \epsilon\max\{1,|f_\epsilon|\}$.
Therefore, $(\mathbf x_\epsilon,\mathbf s_\epsilon)$ is an $\epsilon$-optimal solution of problem $\mathcal P_{\mathrm S}'$ in the relative-gap sense. The corresponding transmit beamforming vector $\mathbf w_\epsilon$ can be recovered by the MRT solution, and thus $(\mathbf x_\epsilon,\mathbf s_\epsilon,\mathbf w_\epsilon)$ gives an $\epsilon$-optimal solution to the problem $\mathcal{P}_{\mathrm{S}}$. This completes the proof.

\bibliographystyle{IEEEtran}
\bibliography{citepassone}

\vfill

\end{document}